\newcommand{\eps}{\varepsilon}
\newcommand{\lead}{{\tt leader}}
\newcommand{\foll}{{\tt follower}}
\newcommand{\formingjunta}{{\tt Forming\_junta}}
\newcommand{\polylog}{{\rm poly}\log}
\newcommand{\qed}{\hfill\ensuremath{\blacksquare\\}}%
\newtheorem{theorem}{Theorem}
\newtheorem{lemma}{Lemma}
\newtheorem{corollary}{Corollary}
\newtheorem{fact}{Fact}
\newenvironment{proof}{\noindent{\bf Proof:}}{\qed}
\date{}
\title{Fast Space Optimal Leader Election in Population Protocols\footnote{This work is sponsored in
part by the University of Liverpool initiative Networks Sciences and Technologies (NeST) and 
by the Polish National Science Centre grant DEC-2012/06/M/ST6/00459.}}
\author{{\bf Leszek G\k asieniec} \\ University of Liverpool \\ {\tt l.a.gasieniec@liverpool.ac.uk} \and {\bf Grzegorz Stachowiak} \\ Uniwersytet Wroc\l awski \\ {\tt gst@cs.uni.wroc.pl}}
\begin{document}

\maketitle

\thispagestyle{empty}

\begin{center}
{\bf Abstract}
\end{center}

The model of {\em population protocols} refers to the growing in popularity theoretical framework suitable for studying 
{\em pairwise interactions} within a large collection of simple indistinguishable entities, frequently called {\em agents}.
In this paper the emphasis is on the space complexity in fast {\em leader election} via population protocols governed by  
the {\em random scheduler}, which uniformly at random selects pairwise interactions from the population of $n$ agents.

The main result of this paper is a new fast and space optimal {\em leader election protocol}.
The new protocol operates in {\em parallel time} $O(\log^2 n)$
equivalent to $O(n\log^2 n)$ sequential {\em pairwise interactions}, 
in which each agent utilises $O(\log\log n)$ states. 
This double logarithmic space utilisation matches asymptotically the lower bound $\frac{1}{2}\log\log n$ on 
the number of states utilised by agents in any leader election algorithm with the running 
time $o(\frac{n}{{\rm polylog}\ n})$, see~\cite{AA+17}.

Our solution relies on the concept of phase clocks, 
a fundamental synchronisation and coordination tool in the field of Distributed Computing. 
We propose a new fast and robust population protocol for initialisation of phase clocks
to be run simultaneously in multiple modes and intertwined with the leader election process.
We also provide the reader with the relevant formal argumentation indicating that 
our solution is always correct and fast with high probability.   

%
%




\section{Introduction}\label{IN:sect}
The model of {\em population protocols} adopted in this paper was introduced in the seminal
paper of Angluin {\em et al.} \cite{AA+04}. Their model provides a universal theoretical framework for studying 
pairwise interactions within a large collection of 
indistinguishable entities, 
very often referred to as {\em agents} equipped with fairly limited communication and computation ability. 
The agents are modelled as finite state machines. When two agents engage in a direct interaction they mutually 
access the contents of their local states. On the conclusion of the encounter their states are modified 
according to the transition function that forms an integral part of the population protocol.
In the {\em probabilistic variant} of population protocols, considered in~\cite{AA+04} and adopted in this paper, 
in each step the {\em random scheduler} selects a pair of agents uniformly at random. 
In this variant in addition to the {\em space utilisation} determined by the maximum number of distinct states 
used by each agent, one is also interested in the {\em running time} of considered algorithmic solutions. 
More recent studies on population protocols focus on performance in terms of {\em parallel time} defined as the total number
of pairwise interactions leading to stabilisation divided by the size (in our case $n$) of the population. 
Please note that the parallel time can be also interpreted as the local time observed by agents.

A population protocol {\em terminates with success} if the whole population eventually stabilises, i.e., arrives at and 
stays indefinitely in the final configuration of states reflecting the desired property of the solution.
For example, in protocols targeting majority in the population, 
the final configuration corresponds to each agent being in the unique state representing 
the colour of the majority, see, e.g.,~\cite{AA+06,AA+08b,GH+15,GH+16,MN+14}.
In {\em leader election}, however, in the final configuration a single agent is expected to conclude 
in a $\lead$ state and all other agents must stabilise in $\foll$ states.
The leader election problem received in recent years greater attention in the context of population 
protocols thanks to several important developments in closely related problems~\cite{CC+14, D14}. 
In particular, the results from~\cite{CC+14, D14} laid down the foundation for the proof that leader election 
cannot be solved in sublinear time with agents utilising a fixed number of states~\cite{DS+15}.
In further work~\cite{AG+15}, Alistarh and Gelashvili studied the relevant upper bound, where they proposed 
a new leader election protocol stabilising in time $O(\log^3 n)$ assuming $O(\log^3 n)$ states at each agent.

In a very recent work Alistarh {\em et al.}~\cite{AA+17} consider a general trade-off between the number of states
used by agents and the time complexity of the stabilisation process.
In particular, the authors provide a separation argument distinguishing between {\em slowly stabilising} population protocols 
which utilise $o(\log\log n)$ states and {\em rapidly stabilising} protocols requiring $O(\log n)$ states at each~agent. 
This result nicely coincides with another fundamental observation by Chatzigiannakis {\em et al.}~\cite{CM+11} 
which states that population protocols utilising $o(\log\log n)$ states can cope only with semilinear predicates 
while the use of $O(\log n)$ states admits computation of symmetric predicates.\\

\paragraph{ Our results.}
In this paper we show that the space complexity lower bound in fast leader election proved in~\cite{AA+17} is asymptotically tight.
The lower bound states that any leader election algorithm with the time complexity $o(\frac{n}{{\rm polylog}\ n})$ 
requires $\frac{1}{2}\log\log n$ states in each agent.
In this paper we present a new fast {\em leader election} algorithm which stabilises in time $O(\log^2 n)$ 
in populations with agents utilising $c\log\log n$ states, for a small positive constant $c$.

In the most recent work on {\em majority problem} in population protocols~\cite{AAG17},
Alistarh {\em et al.} show a lower bound of $\Omega(\log n)$ states for any protocol
which stabilises in $O(n^c)$ time, for any constant $c\le 1.$ They also match this bound
from above by an algorithm which utilises $O(\log n)$ states at each agent, and stabilises 
in time $O(\log^2 n)$. 


Our algorithm utilises a fast and small space reduction of potential leaders (candidates) in the population.
The reduction process is intertwined with a robust initialisation and further utilisation of {\em phase clocks}, 
a core synchronisation tool developed and broadly explored in self-stabilising literature~\cite{H00}.  
This includes the seminal work on clock synchronisation by Arora {\em et al.}~\cite{AD+91},
further extension by Dolev and Welsh~\cite{DW04} to distributed systems prone to Byzantine faults, and 
related study on pulse synchronisation by Daliot {\em et al.}~\cite{DD+03}.
Our variant of the phase clock refers directly to the work of Angluin {\em et al.}~\cite{AA+08} in which the authors
propose efficient simulation of a {\em virtual register machine} supporting basic arithmetic operations.
The simulation in~\cite{AA+08} assumes availability of a single leader which coordinates the relevant exchange of information.
In the same paper, the authors provide also some intuition behind the phase clock coordinated by a {\em junta} 
of $n^\eps$ leaders, for a small positive constant $\eps.$
In this work we formally prove that the phase clock based on junta of cardinality $n^\eps,$ for any $\eps<1,$ allows
to count $\Theta(\log n)$ time units assuming a constant number of states at each agent.
We also consider an extension of the phase clock allowing to compute time $\Theta(\log^c n),$ for any integer constant $c.$
Our main result is based on rapid computation of junta of leaders followed by fast election of a single leader, 
all in time $O(\log^2 n)$ and $O(\log\log n)$ states available at each agent.   \\
%

\paragraph{Related work.} {\em Leader election} is one of the fundamental problems in the field of Distributed Computing on par with other core 
problems including {\em broadcasting}, {\em mutual-exclusion}, {\em consensus}, see an excellent text book by Attiya and Welch~\cite{AW04}.
The problem was originally studied in networks with nodes having distinct labels~\cite{L77}, where an
early work focuses on the ring topology in synchronous~\cite{FL87,HS80} as well as in asynchronous models~\cite{B80,P82}. 
Also, in networks populated by mobile agents the leader election was studied first in networks with labeled nodes~\cite{HK08}.
However, very often leader election is used as a powerful symmetry breaking mechanism enabling feasibility and coordination 
of more complex protocols in systems based on uniform (indistinguishable) entities.
There is a large volume of work~\cite{A80,AS91,AS+88,BS+96,BV99,YK89,YK96} on leader election in anonymous networks. 
In \cite{YK89,YK96} we find a characterisation of message-passing networks in which leader election is feasible when the nodes are anonymous. 
In~\cite{YK89}, the authors study the problem of leader election in general networks under the assumption that node labels
are not unique. In~\cite{FK+04}, the authors study feasibility and message complexity of leader election in rings with
possibly non-unique labels, while in~\cite{DP04} the authors provide solutions to a generalised leader election problem 
in rings with arbitrary labels. The work in~\cite{FP11} focuses on space requirements for leader election in unlabelled networks.
In~\cite{FP12}, the authors investigate the running time of leader election in anonymous networks where the time complexity
is expressed in terms of multiple network parameters. In~\cite{DP14}, the authors study feasibility of leader election
for anonymous agents that navigate in a network asynchronously.
Also, an interesting study on trade-offs between the time complexity and knowledge available in anonymous trees
can be found in recent work of Glacet {\em et al.}~\cite{GM+16}.

Finally, a good example of recent extensive studies on the exact space complexity in related models refers to plurality consensus.
In particular, in \cite{BF+16} Berenbrink {\em et al.} proposed a plurality consensus protocol for $C$ original opinions converging in 
$O(\log C\cdot\log\log n)$ synchronous rounds using only $\log C+ O(\log\log C)$ bits of local memory. 
They also show a slightly slower solution converging in $O(\log n\cdot\log\log n)$ rounds using only 
$\log C +4$ bits of local memory. This disproved the conjecture by Becchetti et al. \cite{BC+15} implying 
that any protocol with local memory $\log C +O(1)$ has the worst-case running time 
$\Omega(k).$
In \cite{GP16} Ghaffari and Parter propose an alternative algorithm converging in time $O(\log C \log n)$
in wich messages and local memory utilise $\log C + O(1)$ bits.
%
In addition, some work on the application of the random walk in plurality consensus protocols 
can be found in \cite{BC+15,GH+15}.


\section{Preliminaries}\label{PR:sect}
We consider population protocols defined on the complete graph of interactions where
the {\em random scheduler} picks uniformly at random pairs of agents drawn from the population of size $n$.
The agents are anonymous, i.e., they don't have identifiers. 
The protocol assumes all agents start in the same initial state.
Our protocol utilises the classical model of population protocols~\cite{AA+04,AA+08} where the consecutive interactions 
refer to ordered pairs of agents $({\tt responder},{\tt initiator})$.
On the conclusion of each interaction the two participating agents change their states $(a,b)$
into $(a',b')$ according to a {\em fixed deterministic transition function} denoted by $(a,b)\rightarrow (a',b').$

\paragraph{Random coins.} 
For the simplicity of presentation, in this paper we dispense fair random coins whp by observing actions of the random scheduler.
It has been shown, however, that agents can generate {\em  synthetic coins} which become almost uniform after a constant number of interactions~\cite{AA+17}.
This method is based on concentration properties of random walks on the hypercube, see, e.g.,~\cite{AR16}. 
A similar approach can be found in~\cite{CKL16} where Cardelli {\em et al.} generate randomness in chemical reaction networks.

We focus here on two complexity measures: (1) the {\em space complexity} defined as the {\em number of states} 
required by each agent, and (2) the {\em time complexity} reflecting the number of interactions needed to 
stabilise the population protocol.
Similarly to other recent work in the field, the emphasis in this paper is on {\em parallel time} of the solution 
defined as the total number of interactions divided by the size of the population.
This time can be also seen as the local time observed by an agent, i.e.,
the number of pairwise interactions in which the agent is involved in.  
In this work we aim at protocols formed of $O(n\cdot \polylog n)$ interactions
equivalent to the parallel running time $O(\polylog n).$

Our leader election algorithm is always correct and 
it runs fast {\em with high probability (whp)} which we define as follows.
Let $\eta$ be a universal constant referring to the quality of our protocols. We say that
an event occurs with {\em negligible} probability if it occurs with probability at most $n^{-\eta}$, and
an event occurs with high probability (whp) if it occurs with probability at least $1-n^{-\eta}$.
This estimate is of asymptotic nature, i.e., we assume $n$ is large enough to validate the results.
Similarly, we say that an algorithm succeeds with high probability if 
it succeeds with probability at least $1-n^{-\eta}$.
When we refer to a probability of failure $p$ different from $n^{-\eta}$, 
we say explicitly whp $1-p$.


\subsection{One-way epidemics}

In our solution we adopt the notion of {\em one-way epidemic} introduced in~\cite{AA+08}.
One-way epidemic refers to the population protocol with the state space $\{0,1\}$ and the transition rule $x,y\rightarrow \max\{x,y\},y$.
One interprets 0's as {\em susceptible} agents and 1's as {\em infected} ones. 
This protocol corresponds to a simple epidemic in which transmission of the infection occurs if and only if 
the initiator is infected and the responder is susceptible. 
We will use the following theorem introduced in~ \cite{AA+08}.

\begin{theorem}[\cite{AA+08}]\label{PR:oneway}
In order to conclude one-way epidemic (infect all agents) one needs 
$\Theta(n\log n)$ pairwise interactions with high probability.
\end{theorem}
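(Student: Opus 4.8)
The plan is to track the number of infected agents and split the two directions of the $\Theta$-bound into, respectively, a concentration inequality for a sum of independent geometric variables and a reduction to the coupon collector problem.

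Write $I_t$ for the number of infected agents after $t$ interactions, starting from $I_0=1$. Given $I$ infected agents, a single interaction creates a new infection exactly when the initiator is one of the $I$ infected agents and the responder is one of the $n-I$ susceptible ones, which happens with probability $p_I=\frac{I(n-I)}{n(n-1)}$. Hence the waiting time $T_I$ to pass from $I$ to $I+1$ infected agents is geometric with parameter $p_I$, the $T_I$ are independent, and the completion time is $T=\sum_{I=1}^{n-1}T_I$. A partial-fraction computation gives $E[T]=\sum_{I=1}^{n-1}\frac{1}{p_I}=2(n-1)H_{n-1}=\Theta(n\log n)$, where $H_{n-1}$ is the harmonic number; this already pins down the correct order, and the real work is to establish concentration in both directions.

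For the upper bound I would bound the upper tail of $T$ directly. The delicate point is that the slowest stages sit at the two ends $I\approx 1$ and $I\approx n-1$, where $p_I$ is as small as $p_{\min}=\frac{1}{n}$, so each such stage is an individual geometric with a heavy tail and a naive union bound over the $\Theta(\log n)$ scales would lose an extra logarithmic factor. Instead I would invoke a Chernoff/MGF-type tail inequality for sums of independent geometric variables, for which the governing quantity is the product $p_{\min}\cdot E[T]=\frac1n\cdot\Theta(n\log n)=\Theta(\log n)$. This makes the probability that $T$ exceeds, say, twice its mean at most $n^{-\Omega(1)}$, yielding $T=O(n\log n)$ whp. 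Equivalently one can partition $[1,n-1]$ into doubling-of-infected and halving-of-susceptible scales and apply a per-scale Chernoff bound to the number of successes, treating the $O(\log n)$ extreme scales that carry few successes by their geometric tails.

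For the lower bound I would bypass the epidemic dynamics altogether by observing that the responder of each interaction is, marginally, a uniformly random agent, and that these choices are i.i.d.\ across interactions. Since the transition $x,y\to\max\{x,y\},y$ can turn a $0$ into a $1$ only in the responder slot, every one of the $n-1$ initially susceptible agents must be selected as responder at least once before the epidemic can conclude. Thus the completion time stochastically dominates the time to collect $n-1$ coupons drawn uniformly at random. Taking $T=c_1 n\ln n$ for a small constant $c_1$, the expected number of agents never selected as responder is about $n^{1-c_1}\to\infty$; a second-moment argument (the uncollected-coupon indicators are negatively correlated, so the variance is controlled by the mean) shows that at least one such agent remains with probability $1-n^{-\Omega(1)}$, whence the epidemic cannot have finished and $T=\Omega(n\log n)$ whp. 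The main obstacle is exactly this high-probability lower bound rather than its expectation: a direct analysis is hampered by the dependencies among the agents' infection times, and a single-agent survival estimate alone only yields a polynomially small success probability. The clean resolution is the reduction to coupon collecting on the independent, uniform responder sequence, which isolates a genuinely necessary event whose concentration is standard; the remaining care is only in choosing the constant $c_1$ and the multiple of the mean so that both failure probabilities fall below $n^{-\eta}$.
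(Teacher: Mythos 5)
First, a point of reference: the paper does not prove this statement at all --- it imports it verbatim from~\cite{AA+08} --- so your attempt can only be compared against the standard analysis. Your architecture is the natural one and most of it is correct: the decomposition of the completion time as a sum of independent geometrics $T=\sum_{I=1}^{n-1}T_I$ with $p_I=\frac{I(n-I)}{n(n-1)}$, the computation $E[T]=2(n-1)H_{n-1}=\Theta(n\log n)$, and the upper tail via an MGF-type bound for sums of independent geometrics are all sound. Since $p_{\min}E[T]=\Theta(\log n)$, the bound $P\bigl(T\geq \lambda E[T]\bigr)\leq e^{-p_{\min}E[T](\lambda-1-\ln\lambda)}$ can be pushed below $n^{-\eta}$ by taking the constant $\lambda$ large enough, with the constant inside the $O(n\log n)$ depending on $\eta$, exactly as it should.

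The genuine gap is in the concentration step of your lower bound. The reduction to coupon collecting is correct (infection can only change the responder's state, and responders are i.i.d.\ uniform), but Chebyshev with the negative-correlation variance bound gives, for $N$ the number of agents never chosen as responder among $c_1n\ln n$ draws, only $P(N=0)\leq \mathrm{Var}(N)/(E[N])^2\leq 1/E[N]\approx n^{-(1-c_1)}$. This failure probability can never be pushed below $n^{-1}$, no matter how you choose $c_1$ or any other constant, so your closing claim that the constants can be tuned to bring both failure probabilities below $n^{-\eta}$ fails for the lower bound whenever the paper's universal constant $\eta$ is at least $1$ --- and the paper routinely needs exponents like $\eta+1$ or $10$, so this is not a cosmetic issue. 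The fix is standard and cheap: the indicators of the events ``agent $i$ has been hit as responder'' are negatively associated (they are monotone functions of the occupancy counts of i.i.d.\ uniform draws), hence $P(N=0)\leq \prod_{i}P(\mbox{agent $i$ is hit})=\bigl(1-(1-1/n)^{c_1n\ln n}\bigr)^{n-1}\leq \exp\bigl(-\Omega(n^{1-c_1})\bigr)$, which is superpolynomially small; alternatively, Poissonize the number of draws to make the occupancy counts independent. With that single replacement (second moment $\to$ negative association), your proof is complete and delivers the theorem at full strength.
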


\section{Phase clock revisited}\label{PC:sect}
In~\cite{AA+08} Angluin {\em et al.} defined and further analysed the concept of
{\em phase clocks} capable of counting parallel time $\Theta(\log n)$ approximately, 
in which each agent participating in the population protocol utilises a
constant number of states.
%
The phase clocks studied in \cite{AA+08} work under the assumption of having already determined a unique leader in the population. 
In the same paper, the authors argue without giving a formal proof that phase clocks should also work when 
the unique leader is replaced by a {\em junta} of $n^\eps$ leaders, 
for some unspecified constant $\eps.$
Further on, the authors suggest also 
that once the
phase clock is in motion the leadership team can be reduced to a single leader with the
help of coin tossing combined with propagation via one-way epidemic. 
This would allow election of a single leader in expected $\Theta(n\log^2 n)$ interactions 
if a junta phase clock is established.
%
In this paper we adopt a similar mechanism to determine a single leader, 
however here the junta team has to be computed first.
We implement this process in two loops, one nested inside the other.
The {\em internal loop} operates in (parallel) time $\Theta(\log n)$ equivalent to
$\Theta(n\log n)$ interactions allowing to distribute 1's via one-way epidemic.
This internal loop in principle mimics actions of a finite state phase clock.
The {\em external loop} is used to count $\Theta(\log n)$ executions of the internal loop.
The {\em external loop} is controlled by a finite state phase clock too.
However this time the agents execute clock operations more seldom, i.e., only when 
they act as responders for the first time after each full execution of the internal loop.
This way a single execution of the external phase clock refers to time $\Theta(\log n)$
counted by the internal loop, which is in turn equivalent to the total time $\Theta(\log^2 n).$

In this section we propose and analyse a modified version of phase clocks capable 
of counting approximately time $\Theta(\log n),$ under assumption that each agent
utilises a constant number of states and the junta of leaders is 
of cardinality $n^{1-\eps}$, for any constant $\eps:0<\eps<1$.
Without loss of generality and for some technical reasons 
we assume $\eps=\eps(k)=\frac{3}{3k+1},$ for a positive integer $k$.

The states of agents controlling the phase clock protocol are structured in pairs $(x,b)$.
The entry $b$ has value $\lead$ for leaders in the junta and $\foll$ for all other agents.
The entry $x$ represents a {\em phase} denoted by a number of an agent drawn 
from the set ${\mathbb Z}_m=\{0,1,2,\ldots,m-1\}$, for some integer constant $m$.
The phases can be interpreted as hours on the dial of an analogue clock.
The increment of clock phases is periodic and computed using the arithmetic modulo $m$ 
denoted by $+_m$.
We also define the maximum of two phases $x,y$ in set ${\mathbb Z}_m$ as:

\[ {\max}_m\{x,y\}=\left\{
   \begin{array}{c}
     \max\{x,y\}\mbox{ if } |x-y|\leq m/2\\
     \min\{x,y\}\mbox{ if } |x-y|>m/2\\
   \end{array} \right.\
\]
Finally we define the {\em circular order} (which is not partial) 
on ${\mathbb Z}_m$ as $x\leq_m y$ iff ${\max}_m\{x,y\}=y.$

Now we are ready to formally define the transition function 
in our version of phase clocks as\\

\[
(x,\foll),(y,b)\rightarrow
({\max}_m\{x,y\},\foll),(y,b),
\]
and\\

\[
(x,\lead),(y,b)\rightarrow
({\max}_m\{x,y+_m 1\},\lead),(y,b).
\]

In this paper we study phase clocks which operate in two (nested in one another) modes:
the {\em ordinary mode} (analogue of the internal loop) and the {\em external mode} (analogue of the external loop).
In each mode, we say the phase clock 
{\em passes through $0$} whenever its current phase $x$ is reduced in 
absolute terms (e.g.,\ changes from phase 5 to phase 3).
As hinted earlier the two modes differ in selecting pairwise 
interactions to the relevant phase clock actions.
In particular:

\begin{itemize}
\item
In the {\bf ordinary mode} all interactions triggered by the random
scheduler prompt actions of the phase clock.
And once the ordinary mode clock passes through zero a {\em meaningful interaction}
of external mode occurs, i.e.,

\item
A {\bf meaningful interaction} refers to the first interaction of an agent after its ordinary phase clock 
passes through $0$ in which also the agent acts as the responder.

\item In the {\bf external mode}, however, interactions used 
to propel actions of the phase clock form {\em series} 
of $n$ interactions in which every agent acts as the responder exactly once. 
In each subsequent series the initiators are chosen at random (by the random scheduler)
and the order in which agents appear as responders is irrelevant.
In our algorithms such series are series of meaningful interactions
following passes through zero of the ordinary mode clock.
\end{itemize}

Before we proceed with the full proof of Theorem \ref{PC:main}, i.e.,
the main result of this section, we share with the reader several useful lemmas.
In the proofs referring to the ordinary mode we utilise Theorem \ref{PR:oneway} 
showing that one-way epidemic protocol concludes after $\Theta(n\log n)$ interactions whp. 
And for the external mode we need an analogue of this theorem.

\begin{lemma}\label{PC:1wayext}
One-way epidemic applied in the external mode requires $O(n\log n)$ interactions whp to stabilise.
\end{lemma}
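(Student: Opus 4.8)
The plan is to recognise the external mode as a synchronous \emph{pull} epidemic and to bound the number of series it needs, each series consisting of exactly $n$ interactions. In one series every agent is a responder exactly once and copies the state of a uniformly random initiator, so a susceptible agent becomes infected in a given series precisely when the initiator it draws is infected. For an upper bound on the number of series I would work with the \emph{snapshot} rule: freeze the set of infected agents at the start of the series and declare a susceptible agent infected iff the initiator it draws lies in this frozen set. Since intra-series cascades can only infect additional agents, the snapshot count is a lower bound on the true number of newly infected agents; it is also insensitive to the order in which agents act as responders, which is exactly why that order is irrelevant in the external mode. Thus it suffices to show that the snapshot process infects everybody within $O(\log n)$ series whp, which yields $O(n\log n)$ interactions.

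I would then split the analysis according to the current number $I$ of infected agents and track it across series. Writing $X$ for the number of newly infected agents in a series that starts with $I$ infected, the snapshot rule gives $X$ stochastically bounded below by $\mathrm{Bin}(n-I,\,I/n)$, since each of the $n-I$ susceptible agents independently draws an infected initiator with probability at least $I/n$. Two regimes are easy. In the \emph{growth} regime $\Theta(\log n)\le I\le n/2$ the mean $\mathbb E[X]=(n-I)I/n\ge I/2$ is $\Omega(\log n)$, so a Chernoff bound gives $X\ge I/4$ in a single series whp, and a union bound over the $O(\log n)$ series in this regime shows that $I$ grows by a factor at least $5/4$ per series; hence $I$ reaches $n/2$ within $O(\log n)$ series whp. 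In the \emph{completion} regime, with $U=n-I\le n/2$, a susceptible agent stays susceptible only if its initiator is also susceptible, an event of probability $U/n\le 1/2$; the expected number of survivors is thus $U^2/n$, so while $U=\Omega(\log n)$ a Chernoff bound shows $U$ drops by a constant factor per series, and once $U=O(\log n)$ the probability that any susceptible agent survives one further series is at most $U\cdot U/n=O(\log^2 n/n)$, which is negligible. Hence the completion regime also finishes within $O(\log n)$ series whp.

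The delicate part, and the main obstacle, is the \emph{startup} regime $1\le I<\Theta(\log n)$, where $\mathbb E[X]\approx I$ is only $O(1)$ and the standard Chernoff concentration is too weak; indeed, with a single infected agent a whole series produces no new infection with constant probability $\approx e^{-1}$. Here I would not attempt to prove multiplicative growth in each series, but instead argue that productive series are frequent enough. The probability that a series is \emph{unproductive} (no new infection) equals $(1-I/n)^{\,n-I}\le e^{-I(n-I)/n}\le e^{-I/2}$ for $I\le n/2$, so every series increases $I$ by at least one with probability bounded below by the positive constant $1-e^{-1/2}$, regardless of the history. Consequently, as long as we remain in the startup regime, the indicator that a series is productive dominates a Bernoulli variable with this constant success probability conditionally on the past, and a Chernoff bound for sums of such conditionally bounded indicators shows that at least $\Theta(\log n)$ productive series occur within $O(\log n)$ series whp; as each productive series raises $I$ by at least one, $I$ reaches $\Theta(\log n)$ within $O(\log n)$ series whp, after which the growth regime takes over. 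Adding the $O(\log n)$ bounds from the three regimes and multiplying by the $n$ interactions per series yields the claimed $O(n\log n)$ interactions whp.
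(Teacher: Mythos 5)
Your proof has the same three-stage skeleton as the paper's own argument: a startup stage that brings the number of infected agents up to $\Theta(\log n)$, a growth stage in which a per-series Chernoff bound gives multiplicative growth by a factor $5/4$ until $n/2$ agents are infected (this part is essentially identical to the paper's), and a completion stage. Your startup argument is a valid alternative to the paper's: the paper tracks the first infected agent $v$ and argues via Chernoff that $c_1\log n$ agents draw $v$ as their initiator within $O(n\log n)$ interactions, whereas you count \emph{productive} series, each of which adds at least one new infection with constant probability $1-e^{-1/2}$ conditionally on the history. Both are sound, and yours has the minor advantage of not having to track which responders have already been infected by $v$.

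The endgame of your completion stage, however, has a genuine quantitative gap. You stop the constant-factor decay of the susceptible count $U$ at $U=O(\log n)$ and finish with a single extra series, bounding the failure probability by $U\cdot U/n=O(\log^2 n/n)$, which you declare negligible. Under the paper's conventions it is not: \emph{negligible} means probability at most $n^{-\eta}$ for a universal constant $\eta$ that may exceed $1$ --- the paper's own proof of this lemma targets per-step failure $n^{-\eta-1}$, other lemmas in the same section work with $n^{-10}$, and Theorem~\ref{PC:main}, which relies on this lemma through Lemma~\ref{PC:move}, is asserted for every fixed $\eta>0$. Your final bound is roughly $n^{-1+o(1)}$, so it caps the achievable $\eta$ strictly below $1$, making the lemma as you prove it too weak for its later uses. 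The fix is cheap and is exactly what the paper does: once $U\le n/2$, a fixed susceptible agent survives any further series with probability at most $1/2$ (its drawn initiator must itself be susceptible), and these draws are independent across series, so over $c_3\log n$ additional series it survives with probability at most $n^{-c_3}$; choosing $c_3\ge\eta+2$ and taking a union bound over all agents yields failure probability at most $n^{-\eta-1}$ at the cost of only $O(n\log n)$ extra interactions. With that replacement for your last step, your argument goes through.
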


\begin{proof}
Let $v$ be the first infected agent.
By the Chernoff bound, for any constant $c_1>0$ the number of 
interactions agent $v$ needs to infect directly $c_1\log n$ agents is bounded 
by $O(n\log n)$ whp $1-n^{-\eta-1}$.
Thus the number of infected agents after $O(n\log n)$ interactions is 
at least $c_1\log n$ whp $1-n^{-\eta-1}$.
Also by the Chernoff bound, there exists a constant $c_2>0,$ s.t., 
if the number of infected agents before a {\em series} of $n$ interactions is $A,$ 
where $c_2\log n<A<n/2$, then on the conclusion of the series the number of 
infected agents is at least $\frac{5}{4}\cdot A$ whp $1-n^{-\eta-1}$.
Thus if we take $c_1\geq c_2$, thanks to the exponential growth, 
the number of infected agents after $O(n\log n)$ interactions is 
at least $n/2$ whp $1-O(n^{-\eta-1}\log n)$.
Furthermore, by taking an extra $c_3 n\log n$ pairwise interactions 
each uninfected (yet) agent interacts $c_3\log n$ times as the responder.
One can choose a constant $c_3,$ s.t., the probability of not getting infected 
during these interactions is at most $n^{-\eta-1}$ for a fixed uninfected agent.
Finally, by the Union bound the probability of failure in any of these steps is
at most $n^{-\eta-1}+O(n^{-\eta-1}\log n)+n^{-\eta-1}\cdot n/2<n^{-\eta}$.
\end{proof}

For the simplicity of presentation we assume in the next few lemmas that 
the agents start in phase $0$.
The main purpose of these lemmas is to bound from above the sizes of sets 
of agents in phases $1,2,3,\ldots$ on the conclusion of $O(n\log n)$ interactions. 
There are separate collections of lemmas for the ordinary and the external modes.
Also here we assume $\eps=\frac{3}{3k+1}$ and $k<m/4$.
In what follows we state two lemmas with similar proofs for phase clocks in 
the ordinary and the external modes.

\begin{lemma}\label{PC:techord}
Assume $j\leq k$ and interactions of the phase clock are performed in the ordinary mode.
Assume also that at some point the number of agents in phase $x\geq_m i$
is at most $A\cdot n^{1-i\eps},$ for all $i=0,1,\ldots,j$ and some value $A\in[1,n^{\eps/3}]$.
Then after $n/4$ interactions the number of agents in phase $x\geq_m i$ is at most
$3A\cdot n^{1-i\eps},$ for all $i=0,1,\ldots,j$ and whp $1-2jn^{-10}$.
\end{lemma}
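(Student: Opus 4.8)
The plan is to track, for each threshold $i\in\{0,1,\dots,j\}$, the set $S_i$ of agents currently in phase $x\geq_m i$, and to bound how many \emph{new} agents $S_i$ can acquire during the $n/4$ interactions. The first observation is that, since the agents start in phase $0$ and $j\leq k<m/4$, every phase relevant to the claim stays inside a window of width less than $m/2$ throughout these $n/4$ interactions (this is exactly where $k<m/4$ is used, and one must check that so few interactions cannot advance any phase far enough to wrap around). On such a window ${\max}_m$ coincides with ordinary maximum and $\geq_m$ is a genuine linear order, so a responder's phase never decreases, $|S_i|$ is non-decreasing, and each interaction inserts at most one new agent into $S_i$. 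The case $i=0$ is trivial because $|S_0|=n\leq 3An^{1-0\cdot\eps}$, so I would focus on $1\leq i\leq j$.

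Next I would classify the two ways an agent can newly enter $S_i$. A $\foll$-responder in phase $<_m i$ moves to phase $\geq_m i$ exactly when it meets an initiator already in phase $\geq_m i$, and the number of such initiators is $|S_i|$. A $\lead$-responder in phase $<_m i$ moves to phase $\geq_m i$ exactly when it meets an initiator in phase $\geq_m i-1$, because of the extra $+_m 1$ applied at leaders; the number of such initiators is $|S_{i-1}|$, but the responder must be one of the at most $n^{1-\eps}$ leaders. Writing $N_i^{\foll}$ and $N_i^{\lead}$ for the corresponding numbers of insertions over the $n/4$ steps, the per-step insertion probabilities are bounded by $|S_i|/(n-1)$ and $n^{1-\eps}|S_{i-1}|/(n(n-1))$ respectively.

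I would then run a bootstrap argument by induction on $i$. Assume the level-$(i-1)$ bound $|S_{i-1}|\leq 3An^{1-(i-1)\eps}$ holds throughout (true whp by the inductive hypothesis), and introduce the stopping time $\tau$ equal to the first step at which $|S_i|$ would exceed $3An^{1-i\eps}$. For $t\leq\tau$ both bounds feed into the per-step estimates, giving conditional insertion probabilities at most
\[
\frac{3An^{1-i\eps}}{n-1}\quad\text{and}\quad\frac{3A\,n^{2-i\eps}}{n(n-1)},
\]
each of which sums, over the $n/4$ steps, to roughly $\tfrac{3}{4}An^{1-i\eps}$. A multiplicative Chernoff bound applied separately to $N_i^{\foll}$ and to $N_i^{\lead}$ then shows each stays below $An^{1-i\eps}$, so that $N_i^{\foll}+N_i^{\lead}\leq 2An^{1-i\eps}$ and hence $|S_i|\leq |S_i(0)|+2An^{1-i\eps}\leq 3An^{1-i\eps}$; in particular $\tau>n/4$, which closes the induction. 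The stopping time $\tau$ is precisely what neutralises the self-referential follower term, whose growth is driven by $S_i$ itself — this is the step I expect to be the main obstacle.

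The crux of the concentration, and where the hypotheses $A\in[1,n^{\eps/3}]$ and $\eps=\tfrac{3}{3k+1}$ earn their keep, is making the Chernoff tails beat $n^{-10}$. Each relevant mean is at least $\tfrac{1}{2}An^{1-i\eps}\geq \tfrac12 n^{1-k\eps}=\tfrac12 n^{1/(3k+1)}=\tfrac12 n^{\eps/3}$, a fixed polynomial in $n$, so each deviation probability is $\exp(-\Omega(n^{\eps/3}))\ll n^{-10}$; the upper bound $A\leq n^{\eps/3}$ keeps every target count $3An^{1-i\eps}$ strictly below $n$ for $i\geq1$, so the per-step probabilities really are small and the estimates above are not vacuous. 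Using one Chernoff bound for $N_i^{\foll}$ and one for $N_i^{\lead}$ at each of the levels $i=1,\dots,j$ and taking a union bound over these $2j$ events yields the stated overall failure probability $2jn^{-10}$.
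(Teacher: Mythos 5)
Your proposal is correct and follows essentially the same route as the paper's proof: induction over phase levels, splitting new entrants into leaders (driven by initiators in phase $\geq_m i-1$, with the $n^{1-\eps}$ junta size capping the responder pool) and followers (driven by initiators in phase $\geq_m i$), then a Chernoff bound for each of the $2j$ counts and a union bound giving $1-2jn^{-10}$. Your stopping-time device for the self-referential follower term is just a repackaging of the paper's explicit coupling, in which each interaction appends a bit to a $0$-$1$ sequence padded with extra randomness so that the entries become i.i.d.\ Bernoulli with the capped probability $3An^{-j\eps}$, dominating the true insertion process until the bound is violated.
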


\begin{proof}
We prove this lemma by induction on~$j$.
For $j=0$ the thesis holds since the number of agents in phase $x\geq_m 0$
is at most $n<3A\cdot n^{1-0\cdot\eps}$ with probability~$1$.
Assume now, the thesis is true for $j-1$ and we prove it for $j$.
By the inductive assumption after the series of interactions 
the number of agents in phase $x\geq_m i$ is bounded from above 
by $3A\cdot n^{1-i\eps},$ for all $i=0,1,\ldots,j-1$ whp $1-2(j-1)n^{-10}$.
Two types of agents can enter phase $x\geq_m j$ during these interactions.

The first type refers to the leaders.
A leader can enter phase $x\geq_m j$ if it acts as the responder 
in the interaction with some initiator in phase $y\geq_m j-1.$
Assume the number of the potential initiators is at most $3A\cdot n^{1-(j-1)\eps}$,
which happens according to the inductive hypothesis whp.
Thus the probability $p_{\iota}$ that a new leader enters phase $x\geq_m j$
during any of $n/4$ interactions $\iota$
is at most $3A\cdot n^{1-(j-1)\eps}n^{1-\eps}/n^2=3A\cdot n^{-j\eps}$.
We attribute a binary $0$-$1$ sequence $\sigma$ of length $n/4$ to these interactions.
Initially $\sigma$ is empty and during each interaction $\iota$ we 
pad $\sigma$ with one bit as follows.
If a new leader in phase $x\geq_m j$ appears, we add $1$ to $\sigma$.
If no new leader in phase $x\geq_m j$ is selected, $1$ is inserted to $\sigma$ but 
only with probability $(3A\cdot n^{-j\eps}-p_{\iota})/(1-p_{\iota})$ and $0$ otherwise.
This way all entries of $\sigma$ are independently equal to $1$ 
with probability $3An^{-j\eps}$.
If the number of $1$s in $\sigma$ is smaller or equal to $A\cdot n^{1-j\eps}$,
the number of new leaders in phase $x\geq_m j$ is not larger than $A\cdot n^{1-j\eps}$.
The expected number of $1$s in 
$\sigma$ is $\frac{3}{4}A\cdot n^{1-j\eps}\geq \frac{3}{4}n^{\eps/3}$.
By the Chernoff bound, the probability this number is larger than $A\cdot n^{1-j\eps}$ is
negligible and smaller than $e^{-n^{\eps/3}/36}<n^{-10},$ for sufficiently large~$n$.
Thus the number of new leaders in phase $x\geq_m j$ is not larger than $A\cdot n^{1-j\eps}$ whp $1-n^{-10}$.

The second type of new agents in phase $x\geq_m j$ refers to followers.
A follower enters phase $x\geq_m j$, if it is a responder to an initiator
in phase $y\geq_m j$.
Also here we attribute a $0$-$1$ sequence $\rho$ of length $n/4$ to the relevant interactions.
Prior to these $n/4$ interactions $\rho$ is empty.
During each interaction $\iota$ gets extended $\rho$ by a single bit.
Let $p_{\iota}$ be the probability of getting a new follower in 
phase $x\geq_m j$ in a subsequent interaction $\iota$.
If $p_{\iota}>3A\cdot n^{-j\eps}$, then $1$ is inserted to $\rho$ 
with probability $3A\cdot n^{-j\eps}$ and $0$ otherwise.
If $p_{\iota}\leq 3A\cdot n^{-j\eps}$ and a new follower in phase 
$x\geq_m j$ appears, $1$ is added to $\rho$.
If $p_{\iota}\leq 3A\cdot n^{-j\eps}$ and no new follower in phase 
$x\geq_m j$ appears, then $1$ is added to $\rho$ with probability 
$(3An^{-j\eps}-p_{\iota})/(1-p_{\iota})$ and $0$ otherwise.
Note that until more than $A\cdot n^{1-j\eps}$ new followers occur 
in phase $x\geq_m j$, $p_{\iota}\leq 3A\cdot n^{1-j\eps}/n = 3A\cdot n^{-j\eps}$.
If the number of $1$s in $\rho$ is smaller or equal to $A\cdot n^{1-j\eps}$,
the number of new followers in phase $x\geq_m j$ is not larger 
than $A\cdot n^{1-j\eps}$.
The expected number of $1$s in $\rho$ is $\frac{3}{4}A\cdot n^{1-j\eps}\geq \frac{3}{4}n^{\eps/3}$.
By the Chernoff bound the probability that this number is larger than $A\cdot n^{1-j\eps}$ 
is negligible and smaller than $e^{-n^{\eps/3}/36}<n^{-10},$ for sufficiently large $n$.
Thus the number of new followers in phase $x\geq_m j$ is not larger than 
$A\cdot n^{1-j\eps}$ whp at least $1-n^{-10}$.
This concludes the proof that the number of agents in phase $x\geq_m i$ is at most 
$3A\cdot n^{1-i\eps},$ for all $i=0,1,\ldots,j$ whp $1-2jn^{-10}$.
\end{proof}

We formulate now the analogous lemma for the external mode.

\begin{lemma}\label{PC:techext}
Assume $j\leq k$ and interactions of the phase clock are performed in the external mode.
Assume also that at some point the number of agents in phase $x\geq_m i$
is at most $A\cdot n^{1-i\eps},$ for all $i=0,1,\ldots,j$ and some value $A\in[1,n^{\eps/3}]$.
Consider a series of at most $n/4$ interactions in which at most $n^{1-i\eps}/4$ leaders act as responders.
After this series of interactions, the number of agents in phase $x\geq_m i$ is at most
$3A\cdot n^{1-i\eps},$ for all $i=0,1,\ldots,j$ and whp at least $1-2jn^{-10}$.
\end{lemma}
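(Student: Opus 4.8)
The plan is to mirror the structure of the proof of Lemma~\ref{PC:techord} almost verbatim, since the two lemmas differ only in how interactions are selected: the ordinary mode draws an arbitrary batch of $n/4$ random scheduler interactions, whereas the external mode processes a \emph{series} in which the responders are prescribed (each agent responds once per full series) and only the initiators are random. First I would set up the same induction on $j$, with the base case $j=0$ holding trivially because every agent is in phase $x\geq_m 0$ and $n<3A\cdot n^{1-0\cdot\eps}$ with probability~$1$. For the inductive step I would again split the new entrants to phase $x\geq_m j$ into the two types: leaders, which advance by the $+_m 1$ increment when responding to an initiator in phase $y\geq_m j-1$, and followers, which advance only when responding to an initiator already in phase $y\geq_m j$.

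The key step is re-deriving the per-interaction probability bounds under the external selection rule. For a follower the probability of entering phase $x\geq_m j$ is still governed solely by the chance that its (random) initiator lies in phase $y\geq_m j$; since by the inductive hypothesis there are at most $3A\cdot n^{1-j\eps}$ such initiators, the probability per interaction is again at most $3A\cdot n^{1-j\eps}/n=3A\cdot n^{-j\eps}$, so the follower analysis carries over unchanged. For the leaders, however, the external mode forces the responders, and the hypothesis of the lemma caps the number of leaders acting as responders in the series by $n^{1-i\eps}/4$. This is exactly the ingredient that replaces the $n/4$ generic interactions of the ordinary mode: I would bound the number of leader-type entrants into phase $x\geq_m j$ by summing over those prescribed leader-responses, each of which succeeds only when its random initiator sits in phase $y\geq_m j-1$ (probability at most $3A\cdot n^{-(j-1)\eps}$). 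The expected count of new leaders is then at most $\tfrac{1}{4}n^{1-j\eps}\cdot 3A\cdot n^{-(j-1)\eps}$\,---\,I would check that this matches the $\tfrac{3}{4}A\cdot n^{1-j\eps}$ target expectation so that the Chernoff concentration, giving failure probability below $e^{-n^{\eps/3}/36}<n^{-10}$, applies identically.

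As in the ordinary case, I would formalise the concentration through the same coupling device: attribute a $0$--$1$ sequence $\sigma$ (for leaders) and $\rho$ (for followers) to the successive interactions of the series, padding with artificial successes so that each bit is independently~$1$ with the appropriate fixed probability, and observe that the true number of new agents is dominated by the number of $1$s. The exponential-tail estimate then forces at most $A\cdot n^{1-j\eps}$ new agents of each type whp $1-n^{-10}$, and combining with the inductive bound of $1-2(j-1)n^{-10}$ yields the claimed $1-2jn^{-10}$ by a union bound over the two new failure events.

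The main obstacle I anticipate is handling the statistical dependence introduced by the external series. Because the responders are fixed in advance and each agent responds exactly once, the successive interactions are not independent trials in the same clean sense as the ordinary mode, and the set of ``potential initiators in phase $y\geq_m j-1$'' can itself grow within the series. I would control this by the same monotonicity observation used in Lemma~\ref{PC:techord}\,---\,namely that until more than $A\cdot n^{1-j\eps}$ new agents appear the relevant population of high-phase initiators stays below $3A\cdot n^{1-j\eps}$, so the per-interaction success probability stays below $3A\cdot n^{-j\eps}$ throughout, legitimising the stochastic-domination coupling. Making precise that the prescribed $n^{1-i\eps}/4$ leader-responses per phase level are the quantity that keeps the leader expectation at the required $\tfrac{3}{4}A\cdot n^{1-j\eps}$ is the one genuinely new piece of bookkeeping relative to the ordinary-mode argument.
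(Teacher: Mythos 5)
Your proposal follows the paper's proof essentially verbatim: the same induction on $j$ with the trivial base case, the same split of new entrants into leaders (responding to an initiator in phase $y\geq_m j-1$) and followers (responding to an initiator in phase $y\geq_m j$), the same coupling to independent $0$--$1$ sequences padded to a fixed length and fixed success probability, and the same Chernoff-plus-union-bound conclusion giving $1-2jn^{-10}$. The one blemish is your writing the leader-response cap as $\tfrac{1}{4}n^{1-j\eps}$ where the paper uses $\tfrac{1}{4}n^{1-\eps}$ (which is what makes the expected count come out to exactly $\tfrac{3}{4}A\cdot n^{1-j\eps}$), but since a smaller cap only reduces the number of entrants and your padding device restores the fixed mean, this does not affect correctness.
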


\begin{proof}
We prove the lemma by induction on~$j$.
For $j=0$ the thesis holds since the number of agents in phase $x\geq_m 0$
is at most $n<3A\cdot n^{1-0\cdot\eps}$ with probability~$1$.
Assume now, the thesis is true for $j-1$ and we prove it for $j$.
By the inductive assumption after the series of interactions from the Lemma's thesis
the number of agents in phase $x\geq_m i$ is bounded from above 
by $3A\cdot n^{1-i\eps},$ for all $i=0,1,\ldots,j-1$ whp $1-2(j-1)n^{-10}$.
Two types of agents can enter phase $x\geq_m j$ during these interactions.

The first type refers to the leaders.
A leader can enter phase $x\geq_m j$ if it acts as the responder 
in the interaction with some initiator in phase $y\geq_m j-1.$
Assume the number of the potential initiators is at most $3A\cdot n^{1-(j-1)\eps}$,
which happens according to the inductive hypothesis whp.
There are at most $n^{1-\eps}/4$ interactions $\iota$ in the series in
which a leader is the responder.
The probability $p_{\iota}$ that such a leader enters phase $x\geq_m j$
during interaction $\iota$ is  at most $3A\cdot n^{-(j-1)\eps}$.
We attribute a binary $0$-$1$ sequence $\sigma$ of length $n^{1-\eps}/4$ to these interactions.
Initially $\sigma$ is empty and during each interaction $\iota$ we 
pad $\sigma$ with one bit as follows.
If a new leader in phase $x\geq_m j$ appears, we add $1$ to $\sigma$.
If no new leader in phase $x\geq_m j$ is selected, $1$ is inserted to $\sigma$ but 
only with probability $(3A\cdot n^{-(j-1)\eps}-p_{\iota})/(1-p_{\iota})$ and $0$ otherwise.
This way all entries of $\sigma$ are independently equal to $1$ 
with probability $3A\cdot n^{-(j-1)\eps}$.
If $\sigma$ has less than $n^{1-\eps}/4$ entries we add lacking entries
by independent coin tosses each time obtaining 1 with probability $3A\cdot n^{-(j-1)\eps},$
and $0$ with the remaining probability.
If the number of $1$s in $\sigma$ is smaller or equal to $A\cdot n^{1-j\eps}$,
the number of new leaders in phase $x\geq_m j$ is not larger than $A\cdot n^{1-j\eps}$.
The expected number of $1$s in 
$\sigma$ is $\frac{3}{4}A\cdot n^{1-j\eps}\geq \frac{3}{4}n^{\eps/3}$.
By the Chernoff bound, the probability this number is larger than $A\cdot n^{1-j\eps}$ is
negligible and smaller than $e^{-n^{\eps/3}/36}<n^{-10},$ for sufficiently large~$n$.
Thus the number of new leaders in phase $x\geq_m j$ is not larger than $A\cdot n^{1-j\eps}$ whp $1-n^{-10}$.

The second type of new agents in phase $x\geq_m j$ refers to followers.
A follower enters phase $x\geq_m j$, if it is a responder to an initiator
in phase $y\geq_m j$.
Also here we attribute a $0$-$1$ sequence $\rho$ of length $n/4$ to the relevant interactions.
Prior to these at most $n/4$ interactions $\rho$ is empty.
During each interaction $\iota$ gets extended $\rho$ by a single bit.
Let $p_{\iota}$ be the probability of getting a new follower in 
phase $x\geq_m j$ in a subsequent interaction $\iota$.
If $p_{\iota}>3A\cdot n^{-j\eps}$, then $1$ is inserted to $\rho$ 
with probability $3A\cdot n^{-j\eps}$ and $0$ otherwise.
If $p_{\iota}\leq 3A\cdot n^{-j\eps}$ and a new follower in phase 
$x\geq_m j$ appears, $1$ is added to $\rho$.
If $p_{\iota}\leq 3A\cdot n^{-j\eps}$ and no new follower in phase 
$x\geq_m j$ appears, then $1$ is added to $\rho$ with probability 
$(3An^{-j\eps}-p_{\iota})/(1-p_{\iota})$ and $0$ otherwise.
Note that until more than $A\cdot n^{1-j\eps}$ new followers occur 
in phase $x\geq_m j$, $p_{\iota}\leq 3A\cdot n^{1-j\eps}/n = 3A\cdot n^{-j\eps}$.
If the number of $1$s in $\rho$ is smaller or equal to $A\cdot n^{1-j\eps}$,
the number of new followers in phase $x\geq_m j$ is not larger 
than $A\cdot n^{1-j\eps}$.
The expected number of $1$s in $\rho$ is $\frac{3}{4}A\cdot n^{1-j\eps}\geq \frac{3}{4}n^{\eps/3}$.
By the Chernoff bound the probability that this number is larger than $A\cdot n^{1-j\eps}$ 
is negligible and smaller than $e^{-n^{\eps/3}/36}<n^{-10},$ for sufficiently large $n$.
Thus the number of new followers in phase $x\geq_m j$ is not larger than 
$A\cdot n^{1-j\eps}$ whp at least $1-n^{-10}$.
This concludes the proof that the number of agents in phase $x\geq_m i$ is at most 
$3A\cdot n^{1-i\eps},$ for all $i=0,1,\ldots,j$ whp at least $1-2jn^{-10}$.
\end{proof}

All lemmas below apply to both (the ordinary and the external) modes of the phase clock.

\begin{lemma}\label{PC:boundk}
Assume all agents start in the clock phase 0.
The probability that after $\frac{1}{8(3k+1)} n\log_3 n$ 
interactions (in either of the phase clock modes)
there are at least $n^{2/(3k+1)}$ agents in phases $x\geq_m k$
is at most $2(\eps/3)k\log_3 n\cdot n^{-10}$.
\end{lemma}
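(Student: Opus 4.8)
The plan is to bootstrap the two technical lemmas (Lemma~\ref{PC:techord} in the ordinary mode and Lemma~\ref{PC:techext} in the external mode) by iterating them over consecutive blocks of $n/4$ interactions, tracking how the common multiplicative factor $A$ in the bound $A\cdot n^{1-i\eps}$ grows from one block to the next. First I would pin down the starting configuration: since every agent begins in phase $0$, the number of agents in phase $x\geq_m i$ is exactly $n$ for $i=0$ and $0$ for each $1\le i\le k$, where I use $k<m/4$ so that phase $0$ genuinely lies below $i$ in the circular order. Hence the hypothesis of the technical lemma holds at time $0$ with $A=1$, simultaneously for all $i=0,1,\dots,k$.

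Next I would partition the $\frac{1}{8(3k+1)}n\log_3 n$ interactions into $t$ consecutive blocks of $n/4$ interactions each, so that $t=\frac{1}{2(3k+1)}\log_3 n=\frac{\eps}{6}\log_3 n$. Invoking the relevant technical lemma on each block in turn replaces the bound $A\cdot n^{1-i\eps}$ by $3A\cdot n^{1-i\eps}$, so after $s$ blocks the governing factor is $A_s=3^s$. In the external mode I would additionally observe that a full series distributes all $n^{1-\eps}$ leaders over $n$ responder slots, so each block of $n/4$ interactions may be taken to contain at most $n^{1-\eps}/4$ leaders acting as responders, which is precisely the extra precondition required by Lemma~\ref{PC:techext}.

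The crucial check is that the precondition $A\in[1,n^{\eps/3}]$ survives the entire iteration. Since $A_s\le A_t=3^t=n^{\eps/6}\le n^{\eps/3}$ for every $s\le t$, the factor never leaves the admissible window, so every block legitimately applies its technical lemma. Consequently, on conclusion of all $t$ blocks the number of agents in phase $x\geq_m k$ is at most $A_t\cdot n^{1-k\eps}=n^{\eps/6}\cdot n^{\eps/3}=n^{\eps/2}$, and because $1-k\eps=\eps/3$ and $\eps/2<2\eps/3=2/(3k+1)$ this lies strictly below the threshold $n^{2/(3k+1)}$. A union bound over the $t$ blocks, each failing with probability at most $2kn^{-10}$, then yields a total failure probability of at most $2kt\,n^{-10}=\frac{k}{3k+1}\log_3 n\cdot n^{-10}\le 2(\eps/3)k\log_3 n\cdot n^{-10}$, as claimed.

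I expect the main obstacle to be the calibration that keeps the geometric factor $A_s=3^s$ inside the polynomial window $[1,n^{\eps/3}]$: the interaction budget $\frac{1}{8(3k+1)}n\log_3 n$ is tuned exactly so that $3^t=n^{\eps/6}$, leaving precisely the factor $n^{1-k\eps}=n^{\eps/3}$ of slack and producing the final bound $n^{\eps/2}<n^{2/(3k+1)}$. Relating $1-k\eps$, $\eps/3$ and $2/(3k+1)$ through the identity $\eps=3/(3k+1)$ is the bookkeeping that ties the exponents together, and in the external mode one must simultaneously ensure that the leader-responder budget of Lemma~\ref{PC:techext} is honoured within each block.
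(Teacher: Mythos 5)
Your overall scheme -- iterate the technical lemmas over blocks, track the multiplicative factor $A_s=3^s$, check it stays in the window $[1,n^{\eps/3}]$, and finish with a union bound -- is exactly the paper's argument, and your treatment of the \emph{ordinary} mode is correct (you even land on the slightly stronger bound $n^{\eps/2}$). The genuine gap is in the \emph{external} mode, at the step where you assert that ``each block of $n/4$ interactions may be taken to contain at most $n^{1-\eps}/4$ leaders acting as responders.'' This does not follow from anything you have established: a series only guarantees that each agent responds exactly once over all $n$ of its interactions, and the order of responders is arbitrary, so all (up to) $n^{1-\eps}$ leader-responses could be clustered inside a single quarter of the series (note $n^{1-\eps}\le n/4$ for large $n$). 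A block of $n/4$ consecutive interactions can therefore carry up to $n^{1-\eps}$ leader-responses, four times the budget that Lemma~\ref{PC:techext} demands -- and the budget is not cosmetic: with $n^{1-\eps}$ leader-responder slots the expected number of new leaders entering phase $x\geq_m j$ is already $3A\cdot n^{1-j\eps}$, which exceeds the allowance $A\cdot n^{1-j\eps}$ in that lemma's proof, so the factor-$3$ growth per application cannot be salvaged by Chernoff. In short, you invoke Lemma~\ref{PC:techext} on blocks that need not satisfy its hypothesis.

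The paper closes this hole by cutting each series of $n$ interactions into (up to) \emph{eight} subseries, with cut points chosen so that every subseries simultaneously has at most $n/4$ interactions and at most $n^{1-\eps}/4$ leader-responses; each constraint alone forces at most four pieces, hence eight suffice for both. The price is that the number of applications of Lemma~\ref{PC:techext} over the budget of $\frac{1}{8(3k+1)}n\log_3 n$ interactions doubles relative to your count -- roughly $\frac{\eps}{3}\log_3 n$ applications instead of your $\frac{\eps}{6}\log_3 n$ -- so the factor grows to $A=n^{\eps/3}$ (the paper's sequence $1,3,9,\ldots,n^{\eps/3}/3$, still inside the admissible window) rather than $n^{\eps/6}$, and the final count bound becomes exactly $n^{\eps/3}\cdot n^{1-k\eps}=n^{2/(3k+1)}$ with failure probability exactly $2(\eps/3)k\log_3 n\cdot n^{-10}$. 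This is precisely why the lemma's statement carries these constants and why your external-mode arithmetic (the bound $n^{\eps/2}$ and the probability $\frac{k\eps}{3}\log_3 n\cdot n^{-10}$) cannot be right as derived: it rests on a block structure the external mode does not provide. With the repartitioning into eight subseries per series, the rest of your argument goes through unchanged.
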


\begin{proof}
In the beginning the number of agents in phase $0$ is $n$ and there are no agents in any other phase.
So the number of agents in phases $x\geq_m i$
is at most $3A\cdot n^{1-i\eps},$ for all $i=0,1,\ldots,k$ and $A=1$.
To conclude the proof we apply Lemma \ref{PC:techord} (or Lemma \ref{PC:techext}, respectively to the mode)
$\frac{1}{3k+1}\log_3 n$ times for the series of $\frac{1}{8(3k+1)}n\log_3 n$ subsequent interactions.
For the ordinary mode we get the thesis by applying Lemma \ref{PC:techord} to subsequent series of $n/4$ interactions.

For the extended mode we can split each series of $n$ interactions into eight subseries.
In each subseries there are at most $n/4$ interactions and in at most $n^{1-\eps}$ interactions leaders acts as responders.
We can apply Lemma \ref{PC:techext} to these subseries to obtain the thesis.

Namely $A$ in these applications is equal $1,3,9,\ldots,n^{1/(3k+1)}/3=n^{\eps/3}/3$, and by Lemmas \ref{PC:techord} and \ref{PC:techext}
the number of agents in phase $x\geq_m k$ after all $\frac{1}{8(3k+1)} n\log_3 n$
interactions exceeds $n^{\eps/3}n^{1-k\eps}=n^{2/(3k+1)}$ with probability at most
$2(\eps/3)k\log_3 n\cdot n^{-10}$.
\end{proof}

\begin{lemma}\label{PC:series}
Assume all agents start in the clock phase 0.
The probability that on the conclusion of $\frac{n\log_3 n}{8(3k+1)}$ interactions
(in either of the phase clock modes)
there are some agents in clock phase $x\geq_m k+1$
is $O(n^{-\eps/3}\log n)$.
\end{lemma}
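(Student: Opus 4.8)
The plan is to reduce the event ``some agent reaches a phase $x\geq_m k+1$'' to a single rare coincidence that is already controlled by Lemma \ref{PC:boundk}. The key structural observation is that, as long as no agent has yet entered phase $\geq_m k+1$, the \emph{first} agent to do so must be a leader acting as the responder to an initiator already in phase $\geq_m k$. Indeed, the follower rule $(x,\foll),(y,b)\to(\max_m\{x,y\},\foll),(y,b)$ merely copies phases, so a follower could reach phase $\geq_m k+1$ only by meeting an initiator already there, which is impossible for the first such event; whereas the leader rule $(x,\lead),(y,b)\to(\max_m\{x,y+_m 1\},\lead),(y,b)$ increments, so a leader responder jumps to phase $\geq_m k+1$ precisely when its initiator sits in phase $y\geq_m k$. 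Since we work in the early regime where all phases stay well below $m/2$ (because $k<m/4$), $\max_m$ coincides with ordinary $\max$, every agent's phase is non-decreasing, and in particular the population in phase $\geq_m k$ never shrinks.

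I would then condition on the good event $E$ of Lemma \ref{PC:boundk}: at the conclusion of the $N=\frac{n\log_3 n}{8(3k+1)}$ interactions fewer than $n^{2/(3k+1)}=n^{2\eps/3}$ agents occupy phase $\geq_m k$, which fails with probability only $2(\eps/3)k\log_3 n\cdot n^{-10}$. By the monotonicity just noted, on $E$ this cap of $n^{2\eps/3}$ in fact holds at \emph{every} intermediate interaction, not only at the end. As the junta keeps its fixed cardinality $n^{1-\eps}$, the chance that a given interaction produces the first phase-$\geq_m k+1$ agent is bounded by the chance that a leader responds to a phase-$\geq_m k$ initiator. In the ordinary mode the interacting pair is uniform, so this probability is at most
\[
\frac{n^{1-\eps}\cdot n^{2\eps/3}}{n(n-1)}\le 2\,n^{-1-\eps/3},
\]
and summing over the $N=O(n\log n)$ interactions gives $2N\,n^{-1-\eps/3}=O(n^{-\eps/3}\log n)$. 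In the external mode each agent is the responder exactly once per series, so only the $n^{1-\eps}$ leaders create a risk in each of the $\frac{\log_3 n}{8(3k+1)}$ series, and for each the random initiator lands in phase $\geq_m k$ with probability at most $2\,n^{2\eps/3-1}$; the product is once more $O(n^{-\eps/3}\log n)$.

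To assemble the bound I would write $\Pr[\,\exists\text{ agent in phase }\geq_m k+1\,]\le\Pr[\neg E]+\Pr[\,B\cap E\,]$, where $B$ is the event above, and control $\Pr[\,B\cap E\,]$ by the union bound over interactions just described, legitimately capping the phase-$\geq_m k$ population by $n^{2\eps/3}$ because $E$ holds. Adding the failure probability $2(\eps/3)k\log_3 n\cdot n^{-10}$ of $E$, which is negligible since $\eps/3<1\ll 10$, yields the claimed $O(n^{-\eps/3}\log n)$. I expect the main obstacle to be the careful justification of this conditioning rather than the elementary estimates: specifically, arguing that the phase-$\geq_m k$ population is monotone so that the end-of-run bound of Lemma \ref{PC:boundk} may be invoked at each intermediate step, and handling the external mode's non-uniform responder schedule when bounding the per-interaction probabilities.
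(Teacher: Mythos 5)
Your proposal follows essentially the same route as the paper's own proof: the structural observation that phase $x\geq_m k+1$ can first be entered only by a leader acting as responder to an initiator in phase $\geq_m k$, the cap of $n^{2\eps/3}$ on such initiators supplied by Lemma~\ref{PC:boundk}, and a union bound over the $\frac{n\log_3 n}{8(3k+1)}$ interactions giving $O(n^{-\eps/3}\log n)$. The paper's version is simply terser—it invokes the cap directly without your explicit monotonicity/conditioning discussion and does not treat the two modes separately—but the decomposition and the estimates are the same.
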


\begin{proof}
The (clock) phase $x=k+1$ can be entered only by a leader which
acts as the responder in the interaction with an agent in clock phase $x=k$.
Since the number of agents in clock phase $x=k$ is at most $n^{2\eps/3}$,
the probability of having such interaction in each series of $n$
interactions is at most $nn^{1-\eps}n^{2\eps/3}/n^2=n^{-\eps/3}$.
By the Union bound the probability of having such interaction during
$\frac{n\log_3 n}{8(3k+1)}$ subsequent interactions is $O(n^{-\eps/3}\log n)$.
\end{proof}

\begin{lemma}\label{PC:progress}
Assume all agents start in clock phase $x=0$ and $d$ is a positive constant.
Then there exists an integer constant $K<m/2,$ s.t., the first agent enters 
phase $x=K$ before interaction $t+dn\log n$ with negligible probability for sufficiently large $n.$
\end{lemma}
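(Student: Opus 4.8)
The plan is to turn the statement into a lower bound on the speed of the maximum phase and to show that, for a suitably large constant $K$, the maximum cannot reach $K$ within the $dn\log n$ interactions following interaction $t$. The starting point is structural. Comparing the two transition rules, a \foll-responder replaces its phase by $\max_m\{x,y\}$, a value already present in the population, so it can never create a phase higher than the current maximum; only a \lead-responder can, and since it moves to $\max_m\{x,y+_m 1\}$ with $y$ at most the current maximum $P$, it raises the maximum to exactly $P+_m 1$ and only when it meets an initiator in phase $P$. As long as we remain in the linear regime $P<m/2$, where $\geq_m$ and $\max_m$ coincide with the usual order and the set of agents in phase $x\geq_m p$ only grows with time, reaching phase $K$ forces the maximum to climb through $0,1,\dots,K$ one unit at a time, each step requiring its own leader--responder interaction with an initiator at the current top level.

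First I would split the $dn\log n$ interactions into $R=\lceil d/c_0\rceil$ consecutive windows of $c_0 n\log n=\frac{n\log_3 n}{8(3k+1)}$ interactions each; here $c_0$ is the constant of Lemma~\ref{PC:boundk} and $R$ is a constant depending only on $d$ and $k$. The core is a single--window claim: if at the start of a window the phase counts near the maximum obey the bounds of Lemma~\ref{PC:boundk} (in particular at most $n^{2\eps/3}$ agents sit at or above the top level $P_0$), then after $c_0 n\log n$ interactions the maximum has advanced by at most a fixed constant $\Delta$ and the same profile is re-established, all but with negligible probability. The first window starts from the all-$0$ configuration and is handled directly by Lemma~\ref{PC:boundk}, which caps the count in phase $x\geq_m k$ by $n^{2\eps/3}$ with the negligible failure $O(\log n\cdot n^{-10})$; by monotonicity of the level sets this bound in fact holds throughout the window, so the per-interaction probability of an advancing (leader-meets-top-level) interaction stays tiny.

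To push from level $P_0+k$ up to $P_0+\Delta$ I would run the bookkeeping of Lemma~\ref{PC:techord} \emph{relative} to $P_0$, bounding the count in phase $x\geq_m P_0+i$ by a geometrically shrinking envelope $3A\cdot n^{1-i\eps}$ with $A$ capped at $n^{\eps/3}$; at $i=k$ this gives $3n^{2\eps/3}$, which restores the hypothesis for the next window. Above level $k$ the Chernoff step of Lemma~\ref{PC:techord} no longer applies, so there I would switch to a first-moment estimate, charging every agent that reaches phase $x\geq_m P_0+k+j$ to a leader-responder interaction one level below and continuing the same envelope to obtain an expected count at most $3n^{2\eps/3-j\eps}$. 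Choosing $j_0=\lceil 2/3+\eta/\eps\rceil$ (enlarged by a constant to absorb the polylogarithmic factors from the iteration) and applying Markov's inequality, the probability that any agent reaches phase $P_0+k+j_0$ within the window is at most $n^{-\eta}$, so $\Delta:=k+j_0$ works. Summing the single-window claim over the $R$ windows by a union bound, the maximum phase after all $dn\log n$ interactions is below $K:=R\Delta+1$ except with negligible probability; since $K$ is a constant we may assume $K<m/2$ by fixing $m$ large enough, which is exactly the statement.

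The step I expect to be the genuine obstacle is the crossing above level $k$. Lemma~\ref{PC:techord} stops precisely at $i=k$ because the expected counts there have dropped into the rare-event regime where the Chernoff bound is no longer available; one must therefore control the top few levels by a careful first-moment/Markov argument and, critically, verify that the per-level shrinkage factor is bounded below $1$ even after the stray logarithmic factors are taken into account, so that a \emph{constant} number $j_0$ of additional levels already buys a failure probability below $n^{-\eta}$ rather than the merely polynomial bound $O(n^{-\eps/3}\log n)$ of Lemma~\ref{PC:series}. A secondary nuisance is that every window except the first starts away from the all-$0$ configuration, so Lemma~\ref{PC:boundk} must be re-applied relative to the current maximum $P_0$; this is legitimate because only the gaps between phases, and not their absolute values, drive the counting, and because the level sets stay monotone while $P<m/2$.
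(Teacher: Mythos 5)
Your structural reductions (only a \lead-responder can raise the maximum phase, and only by one unit at a time), your windowing into constantly many blocks of $\frac{n\log_3 n}{8(3k+1)}$ interactions, and your observation that each window can be analysed relative to the current maximum are all sound and close to the paper's own skeleton. The genuine gap is precisely at the step you flagged: the first-moment envelope above level $k$ is false, so Markov's inequality cannot deliver a negligible per-window failure probability. The reason is follower amplification by one-way epidemic. Your envelope claims the expected count at relative phase $\geq_m k+1$ is at most $3n^{2\eps/3-\eps}=3n^{-\eps/3}<1$. But with probability of order $n^{-\eps/3}\log n$ some leader climbs into relative phase $k+1$ early in the window, and conditioned on that event the remaining $\Theta(n\log n)$ interactions let the one-way epidemic copy this phase to $n^{\delta}$ followers for a constant $\delta>0$ (even to $\Theta(n)$ agents if the window constant exceeds the epidemic constant of Theorem~\ref{PR:oneway}); hence the true expectation is at least of order $n^{\delta-\eps/3}\log n\gg 1$. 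The recursion you propose does not close because the rate of leader climbs into level $j+1$ is proportional to the \emph{total} count at level $j$, followers included, not to the number of leader entries at level $j$ that your charging scheme counts; expectations above level $k$ therefore blow up rather than shrink geometrically. Consequently no constant $j_0$ makes the per-window bad event provably negligible by this method, your per-window guarantee remains $O(n^{-\eps/3}\log n)$, and the union bound over $R=O(1)$ windows yields only $O(n^{-\eps/3}\log n)$ overall --- exactly the bound of Lemma~\ref{PC:series}, which is polynomially but not negligibly small.

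The paper closes this hole with a different amplification, which is the idea your proposal is missing: it never tries to make a single window negligible. Setting $K=\kappa k$, it defines $t_i$ as the first interaction at which some agent enters phase $ik+1$; since just before $t_i$ every agent is at phase $\leq_m ik$, a monotonicity (domination) argument lets Lemma~\ref{PC:series} be applied to each crossing, giving that $t_i-t_{i-1}<\frac{n\log_3 n}{8(3k+1)}$ with probability at most $cn^{-\eps/3}\log_3 n$ \emph{conditioned on the entire past}. The key observation is combinatorial: if the maximum reaches $K$ within $dn\log n$ interactions, then at least $\kappa'$ of the $\kappa$ crossings must be fast, provided $\kappa-\kappa'>8d(3k+1)$, because otherwise the slow crossings alone already consume more than $dn\log n$ interactions. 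By the chain rule, any fixed set of $\kappa'$ crossings are all fast with probability at most $\left(cn^{-\eps/3}\log_3 n\right)^{\kappa'}$, so the failure probability is at most $\binom{\kappa}{\kappa'}\left(cn^{-\eps/3}\log_3 n\right)^{\kappa'}$, which is below $n^{-\eta}$ once $\kappa'>3\eta/\eps$. Multiplying many polynomially small crossing probabilities, rather than forcing negligibility inside one window, is what turns Lemma~\ref{PC:series} into the statement of the lemma.
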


\begin{proof}
Assume $K=\kappa k$.
We can divide all phases $x=1,2,\ldots K$ into
$\kappa$ consecutive chunks having $k$ phases each.
Let $t_i,$ for all $i=0,1,2,\ldots,\kappa-1,$ be the first interaction
in which an agent enters phase $i\cdot k+1,$ where $t_0=0$.
Note that just before interaction $t_i$ all agents are in phases $x\leq_m ik$.
Thus after each subsequent interaction all agent phases are not larger ($\leq_m$)
as if they all started from phase $i\cdot k$ just before interaction $t_i$.
By Lemma~\ref{PC:series} the probability that $t_i-t_{i-1}<\frac{n\log_3 n}{8(3k+1)}$
is smaller than $cn^{-\eps/3}\log_3 n,$ for some constant $c>0$.
The probability, that for at least $\kappa'$ different values $i$ we have 
$t_i-t_{i-1}\leq\frac{n\log_3 n}{8(3k+1)}$
is by Union bound smaller than
\[{\kappa\choose \kappa'} \left(cn^{-\eps/3}\log_3 n\right)^{\kappa'}.\]
Now take $\kappa'>3\eta/\eps$ and $\kappa-\kappa'>d\cdot 8(3k+1)$.
Thus for sufficiently large $n$ we obtain $t_{\kappa}\leq dn\log n$ with probability at most $n^{-\eta}$.
\end{proof}

\begin{lemma}\label{PC:move}
For any constant $d$ there is another constant $K,$ s.t.,
if $m>6K$ and after interaction $t$ there is an agent in phase $i$ 
and all other agents are in phases $x:i-2K\leq_m x\leq_m i$, then whp
\begin{itemize}
\item the first interaction $t'$ when an agent enters phase $i+K$ 
satisfies $t'>t+dn\log n$, and
\item during interaction $t'$ all agents are in phases $x,$ s.t., $i\leq_m x\leq_m i+K$.
\end{itemize}
\end{lemma}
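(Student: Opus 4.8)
The plan is to play two opposing estimates against each other: a \emph{lower} bound on how long the front of the clock needs to advance by $K$ phases, supplied by Lemma~\ref{PC:progress}, against an \emph{upper} bound on how fast the lagging agents are pulled forward, supplied by the one-way epidemic estimate of Theorem~\ref{PR:oneway} in the ordinary mode and Lemma~\ref{PC:1wayext} in the external mode. Throughout the argument we stay inside a circular window of width at most $3K$: the agents start in $[i-2K,i]$ and, up to the first time the front reaches $i+K$, every phase lies in $[i-2K,i+K]$. Because $m>6K$ this window has width below $m/2$, so on it ${\max}_m$ and $\leq_m$ coincide with the ordinary maximum and order and no wrap-around can occur. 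I would first fix a constant $c'$ for which a one-way epidemic infects the whole population within $c'n\log n$ interactions whp, and then apply Lemma~\ref{PC:progress} with parameter $d^{*}=\max\{d,c'\}$ to produce the constant $K$.

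For the first item I would reduce to Lemma~\ref{PC:progress} by a monotone coupling. Consider an auxiliary clock driven by the same scheduler choices in which every agent starts in phase $i$, and couple it step by step with the real configuration. Initially each auxiliary phase dominates ($\geq_m$, within the window) its real counterpart, since all real phases lie in $[i-2K,i]$. Both responder rules ${\max}_m\{x,y\}$ and ${\max}_m\{x,y+_m1\}$ are nondecreasing in each argument, so the domination is preserved after every interaction. Let $\tau$ be the first interaction at which the auxiliary front reaches $i+K$; up to $\tau$ both processes keep all phases in $[i-2K,i+K]$, so the coupling is valid and the real front never overtakes the auxiliary one, whence $t'\ge\tau$. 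Relabelling $i$ as $0$ turns the auxiliary process into exactly the setting of Lemma~\ref{PC:progress}, so whp $\tau>t+d^{*}n\log n$, and therefore $t'\ge\tau>t+d^{*}n\log n\ge t+dn\log n$ whp.

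For the second item I would track the set $S$ of agents whose phase is $\geq_m i$. Within the window $S$ can only grow, because every responder rule takes a maximum; and whenever an initiator lies in $S$ while the responder does not, the responder is drawn into $S$ (a follower jumps to at least the initiator's phase, a leader even further). Hence the growth of $S$ stochastically dominates a one-way epidemic seeded at the single front agent, so by Theorem~\ref{PR:oneway} (respectively Lemma~\ref{PC:1wayext}) every agent belongs to $S$ within $c'n\log n$ interactions after $t$, whp. Since $t'>t+d^{*}n\log n\ge t+c'n\log n$ whp, by the moment the front first reaches $i+K$ every agent already satisfies $x\geq_m i$, and monotonicity keeps it there; as $t'$ is \emph{by definition} the first interaction at which phase $i+K$ appears, no agent exceeds $i+K$ at that moment. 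Thus at interaction $t'$ all agents lie in $[i,i+K]$. A union bound over the negligible failure probabilities of Lemma~\ref{PC:progress} and of the epidemic estimate gives both claims whp.

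The main obstacle is not any single calculation but the coordination of the two time scales through one constant $K$: one must choose $K$ large enough (via Lemma~\ref{PC:progress} with a parameter exceeding the universal epidemic constant $c'$) that catch-up provably finishes before the front can advance by $K$, while simultaneously keeping $3K<m/2$ so that the monotone coupling and the epidemic argument never encounter the circular order wrapping around. The subtle points to verify are that the domination is genuinely preserved under the leader rule ${\max}_m\{x,y+_m1\}$ and that truncating the coupling at $\tau$ (rather than letting the faster auxiliary front overshoot the window) is what legitimises invoking Lemma~\ref{PC:progress} in the first place.
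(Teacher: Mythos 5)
Your proposal is correct and follows essentially the same route as the paper's proof: fix the epidemic constant via Theorem~\ref{PR:oneway} and Lemma~\ref{PC:1wayext}, invoke Lemma~\ref{PC:progress} with parameter $\max\{d,d'\}$ to obtain $K$, use monotonicity to handle agents starting in $[i-2K,i]$, and use epidemic domination to show all agents catch up to phase $\geq_m i$ before $t'$. The only difference is presentational: your explicit monotone coupling and the wrap-around bookkeeping make rigorous the steps the paper dismisses with ``it is easy to see,'' which is a welcome elaboration rather than a departure.
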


\begin{proof}
By Theorem \ref{PR:oneway} and Lemma \ref{PC:1wayext} there exists 
a positive constant $d',$ s.t., 
one-way epidemic succeeds within $d'\cdot n\log n$ interactions whp.
On the other hand by Lemma~\ref{PC:progress}, for a constant $D=\max\{d,d'\}$ there is $K,$ s.t.,
all agents starting in phase $i$ move to phase smaller or equal $i+K-1$ after 
$D n\log n$ interactions whp.
It is easy to see that the same holds if all agents start in phases $x:i-2K\leq_m x\leq_m i$.
Thus $t'>Dn\log n\geq dn\log n$ whp.
Since one way epidemic initiated by an agent in phase $i$ during interaction 
$t$ succeeds whp, all agents after interaction $t'$ are in phase $x\geq_m i$ whp.
\end{proof}

Consider now the interactions in which phase clocks in different agents pass through $0.$
We say that passes through $0$ of two agents are {\em equivalent} if they 
occur during a period in which all agents are in phases $x: 3m/4<_m x<_m m/4$.
This notion defines a relation which is reflexive and symmetric.
For $m$ big enough by Lemma \ref{PC:move} with high probability
this relation is transitive and any two agents have equivalent passes through $0$.
Thus passes of agents through $0$ form {\em equivalence classes}.
This allows us to use argumentation similar to the one proposed in~\cite{AA+08},
however this time for the junta of leaders rather than for a single leader.

\begin{theorem}\label{PC:main}
Assume all agents start the phase clock protocol from the initial phase $0$ 
when at most $n^{1-\eps}$ leaders are already selected.
For any fixed $\eps,\eta,d > 0$, there exists a constant $m,$ 
s.t., the finite-state phase clock with parameter $m$ completes 
$n^\eta$ passes through~$0,$ s.t., the following conditions are satisfied
with high probability $1-n^{-\eta},$ for sufficiently large~$n$.
\begin{itemize}
\item

completes 
First $n^\eta$ passes through~$0$ of all agents form equivalence classes
in which each agent contributes exactly once and
the number of interactions between closest passes through $0$
in different equivalence classes is at least $d n \log n$.
\item
The number of interactions between two subsequent passes
through $0$ in any agent is $O(n \log n)$ whp.
\end{itemize}
\end{theorem}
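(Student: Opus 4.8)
The plan is to run Lemma~\ref{PC:move} as an engine: I would show that from the all-zero start the junta clock advances in synchronized ``macro-steps'' that each push the leading phase forward by $K$ while keeping the whole population bunched within a window of width $K$, and then read off the equivalence-class structure and the two timing bounds from this invariant. The constants must be fixed in the right order. With $\eps$ fixed, the junta scale $n^{1-\eps}$ and the parameter $k=k(\eps)$ are determined. I would then invoke Lemma~\ref{PC:move} with a parameter $d'\ge d$ to obtain $K$, and finally choose $m>6K$ large enough that (i) $K<m/4$, so a width-$K$ window around $0$ sits inside the arc $3m/4<_m x<_m m/4$ used to define equivalence of passes, and (ii) every invocation of Lemma~\ref{PC:move} (and of the feeding Lemma~\ref{PC:progress}) fails with probability at most $n^{-\eta''}$ for an exponent $\eta''$ I fix at the very end. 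Since $m$ only has to be a constant, these requirements can be met simultaneously.

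The heart of the argument is an induction over macro-steps maintaining the invariant: after the current macro-step there is an agent at phase $i$ with all others in phases $x$ satisfying $i-2K\leq_m x\leq_m i$. This holds at the start with $i=0$, as all agents sit in phase $0$. Given it at phase $i$, Lemma~\ref{PC:move} yields a first interaction $t'$ at which some agent enters $i+K$, with $t'>t+d'n\log n$ and with all agents then in phases $i\leq_m x\leq_m i+K$; resetting $i\mapsto i+K$ reinstates the hypothesis because $[i,i+K]\subseteq[i-K,i+K]$. Thus the spread never grows beyond $K$, each macro-step advances the leading phase by exactly $K$, and it consumes at least $d'n\log n\ge dn\log n$ interactions. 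For the matching upper bound I would use the one-way epidemic guarantees (Theorem~\ref{PR:oneway} and Lemma~\ref{PC:1wayext}): the junta's incremented phase propagates to the whole population within $O(n\log n)$ interactions, so a macro-step also completes within $Dn\log n$ interactions for a constant $D$, exactly as in the proof of Lemma~\ref{PC:move}. Hence every macro-step takes $\Theta(n\log n)$ interactions.

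Next I would convert synchronization into the stated combinatorial structure. Because the spread stays below $K<m/4$, each lap contains a period throughout which every agent lies in the arc $3m/4<_m x<_m m/4$, and each agent passes through $0$ exactly once during this period; by the definition preceding the theorem these $n$ passes are pairwise equivalent, which is what makes the (reflexive, symmetric) relation transitive whp and groups one lap's passes into a single class with each agent contributing once. Consecutive classes are separated by the clock moving the leading phase from $m/4$ back around to $3m/4$, i.e.\ by at least one macro-step, so the closest passes in distinct classes are at least $d'n\log n\ge dn\log n$ interactions apart, giving the first bullet. For the second bullet, a fixed agent passes through $0$ once per lap, and a lap is $\lceil m/K\rceil=O(1)$ macro-steps; summing the per-step upper bounds gives $O(1)\cdot Dn\log n=O(n\log n)$ interactions between its successive passes.

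It remains to run this for the first $n^\eta$ classes and to control the total failure probability, which I expect to be the main obstacle. Reaching $n^\eta$ laps takes $(m/K)\,n^\eta=O(n^\eta)$ macro-steps, and the whole conclusion holds only if all of them meet the guarantees of Lemma~\ref{PC:move} simultaneously. Since the number of macro-steps is polynomial in $n$, each invocation must fail with probability $o(n^{-\eta})$: I would demand $\eta''\ge 2\eta$, so that the union bound gives total failure at most $O(n^\eta)\cdot n^{-\eta''}\le n^{-\eta}$. This pushes the requirement back onto the feeding lemmas---in particular the bound $n^{-\eta}$ in Lemma~\ref{PC:progress} must itself be amplified by enlarging the constant $\kappa'$ in its proof, which enlarges $K$ and hence $m$. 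The delicate point is checking that this circular dependence closes, i.e.\ that a single constant $m$ can satisfy $m>6K$, $K<m/4$, and the probability amplification at once; once the constants are chosen in the order $\eta''\to\kappa'\to K\to m$, the rest is a routine composition of the whp guarantees already established.
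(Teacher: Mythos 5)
Your proposal is correct and follows essentially the same route as the paper's own proof: iterate Lemma~\ref{PC:move} to keep all phases bunched in a window of width $K$ inside ${\mathbb Z}_m$ (the paper fixes $m=10K$ and tracks the ten arcs $A_i=\{iK,\ldots,iK+_m K\}$), read the equivalence-class structure off the arc $3m/4<_m x<_m m/4$, lower-bound inter-class gaps by the lemma's $dn\log n$ guarantee, and upper-bound the gap between successive passes of one agent via the one-way-epidemic bound. If anything, your explicit handling of the union bound over the $\Theta(n^\eta)$ macro-steps---amplifying the failure probability of Lemma~\ref{PC:progress} by enlarging $\kappa'$, which feeds back into $K$ and $m$---is more careful than the paper's proof, which leaves that amplification implicit.
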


\begin{proof}
By Lemma~\ref{PC:move} there exists $K,$ s.t., for $m=10K$ the thesis of this Lemma holds for the same $d>0$ as in the Theorem.
We consider ten subsets $A_0,A_1,A_2,\ldots,A_9$ of ${\mathbb Z}_{10K}$ defined as $A_i=\{iK,iK+1,\ldots,iK+_m K\}$.
By Lemma~\ref{PC:move} phases of all agents progress whp from $A_i$ to $A_{i+_{10} 1}$ (modulo 10)
in at least $d n\log n$ interactions whp. 
This implies that agents' passes through $0$ form equivalence classes whp 
and the number of interactions between closest passes through 0
in different equivalence classes is at least $d n \log n$ whp.
Since one way epidemic operates in $O(n\log n)$ interactions whp
each agent increments its phase in $O(n\log n)$ interactions.
Thus the number of interactions between two subsequent passes
through $0$ in any agent is $O(n \log n)$ whp.
\end{proof}




In conclusion, we formulate two useful facts related to phase clocks. 
Fact~\ref{PC:drop} states that if some leaders become followers during 
the phase clock protocol, then the phase clock can only slow down, 
but the upper bound on the number of interactions remains $O(n\log n)$.
Fact~\ref{PC:fault} states that any unsuccessful interactions can only 
slow down the phase clock.

\begin{fact}\label{PC:drop}
The reduction of the number of leaders during execution of the phase clock protocol 
can only slow down phase progression of agents on their clocks.
And if at least one agent remains as leader the number of interactions between two 
subsequent passes through $0$ in any agent is still $O(n\log n)$ whp.
\end{fact}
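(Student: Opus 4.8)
The plan is to prove both halves of the Fact by a single monotone coupling between phase-clock executions driven by the \emph{same} random scheduler but with different leader sets. First I would replace the modular phase $x\in{\mathbb Z}_m$ by an \emph{unwrapped height} $h(a)\in{\mathbb Z}$ recording the total number of increments an agent's value has undergone, so that the two transition rules become
\[
h(r)\leftarrow\max\{h(r),h(i)\}\ (\foll\text{ responder}),\qquad
h(r)\leftarrow\max\{h(r),h(i)+1\}\ (\lead\text{ responder}),
\]
where $r$ is the responder and $i$ the initiator. Both updates are monotone non-decreasing in $h(r)$ and $h(i)$. As long as all agents stay inside an arc shorter than $m/2$ --- exactly the regime guaranteed by Lemma~\ref{PC:move}, under which the whole analysis of this section operates --- the operator ${\max}_m$ coincides with the ordinary maximum of heights, so this integer model faithfully reproduces the protocol and a pass through $0$ corresponds to a height crossing a multiple of $m$.

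For the first assertion I would fix the true (shrinking) execution $P'$, in which leaders are demoted over time, and compare it against the execution $P$ in which no demotion ever happens; since demotion is one-way, the leader sets satisfy $L'(t)\subseteq L(t)$ at every step. Starting both from identical heights, I claim the invariant $h_{P'}(a)\le h_P(a)$ for all $a$ is preserved at every interaction, proved by induction over the scheduler's choices. The only case needing attention is when the responder $r$ is a leader in $P$ but already a follower in $P'$: then $h^{P'}_{\rm new}(r)=\max\{h^{P'}(r),h^{P'}(i)\}\le\max\{h^{P}(r),h^{P}(i)+1\}=h^{P}_{\rm new}(r)$, which still follows from the induction hypothesis and monotonicity; the cases of follower-in-both and leader-in-both are immediate. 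Hence every agent's height in $P'$ lags behind that in $P$, so demotions can only delay the crossing of multiples of $m$, i.e.\ slow down phase progression.

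For the second assertion I would run the coupling in the opposite direction. Because the reduction of leaders is monotone and the leader set is never empty, the decreasing family $\{L'(t)\}$ stabilises to a nonempty set, so some agent $\ell^\star$ stays a leader throughout. Let $P_1$ be the single-leader sub-execution with $L_1(t)=\{\ell^\star\}\subseteq L'(t)$; the same invariant now yields $h_{P_1}(a)\le h_{P'}(a)$, so $P'$ advances at least as fast as $P_1$ and it suffices to upper bound the cycle length of the single-leader clock. In $P_1$ the unique leader $\ell^\star$ always holds the maximum height (only leaders increment), so to advance one unit it must wait for the epidemic to raise some other agent to its height and then be selected as responder to such an agent; by Theorem~\ref{PR:oneway} (and Lemma~\ref{PC:1wayext} in the external mode) each of these takes $O(n\log n)$ interactions whp, and over the constant number $m$ of increments comprising one full cycle the leader completes a pass through $0$ in $O(n\log n)$ whp. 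Adding the $O(n\log n)$ epidemic lag with which every other agent tracks $\ell^\star$ gives consecutive passes through $0$ that are $O(n\log n)$ apart for every agent, as required.

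The step I expect to be the main obstacle is the faithfulness of the unwrapped-height model: the coupling is clean for integer heights, but the protocol actually computes ${\max}_m$, and the two agree only while the population occupies an arc shorter than $m/2$. I would therefore need to invoke Lemma~\ref{PC:move} carefully to confine the argument to this window, and to check that demotions --- which only retard progress, by the first part --- cannot themselves widen the spread of heights past $m/2$ and thereby invalidate the identification of ${\max}_m$ with the ordinary maximum.
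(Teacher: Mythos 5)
The paper itself gives no proof of this Fact: both Fact~\ref{PC:drop} and Fact~\ref{PC:fault} are simply asserted, with the monotonicity of the transition rules treated as self-evident. So your proposal is judged on its own merits, and its skeleton --- a monotone coupling of two executions driven by the same scheduler, on unwrapped integer heights --- is the right formalization. Your first half is correct: with $L'(t)\subseteq L(t)$, the case analysis (the demoted-responder case using $\max\{h(r),h(i)\}\le\max\{h(r),h(i)+1\}$) does preserve the invariant $h_{P'}(a)\le h_P(a)$, for an arbitrary demotion schedule, so demotions can only retard progress. The arc-confinement caveat you flag at the end is real but handleable: demotions slow the front while the followers' catch-up dynamics (copying via epidemic) are unaffected, so the spread of phases cannot widen.

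The genuine gap is in the second half. The invariant $h_{P_1}(a,t)\le h_{P'}(a,t)$, with $P_1$ launched at time $0$, gives $\tau_k^{P'}(a)\le\tau_k^{P_1}(a)$ for the time of the $k$-th pass through $0$, but it does \emph{not} bound the gap between consecutive passes of $P'$: all you can write is
\[
\tau_{k+1}^{P'}(a)-\tau_k^{P'}(a)\ \le\ \tau_{k+1}^{P_1}(a)-\tau_k^{P'}(a),
\]
and nothing prevents $\tau_k^{P'}(a)$ from being far smaller than $\tau_k^{P_1}(a)$. Concretely, if $P'$ keeps its full junta for its first $n^{0.1}$ passes (running at junta speed) and only then shrinks to $\{\ell^\star\}$, this inequality bounds the next gap of $P'$ by roughly $n^{0.1}\cdot n\log n$, which is useless --- and the Fact must hold for polynomially many passes, since that is how it is used alongside Theorem~\ref{PC:main}. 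The statement ``$P'$ advances at least as fast as $P_1$'' conflates pointwise domination of heights with domination of increments over every time window; the former does not imply the latter. The fix is to re-initialize the coupling at the start of each window: at any time $t$, launch a fresh single-leader execution from $P'$'s configuration at time $t$ with leader set $\{\ell^\star\}$, run both on the same scheduler to get domination on $[t,\,t+O(n\log n)]$, and apply the single-leader clock analysis (Lemma~\ref{PC:move}, or the epidemic argument inside Theorem~\ref{PC:main}, both of which are stated for arbitrary arc-confined starting configurations) to that fresh execution; a union bound over the polynomially many windows then finishes the claim. Without this restart the chain of inequalities does not close.
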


\begin{fact}\label{PC:fault}
If some interactions of the phase clock are faulty, i.e.,  they do not result 
in progression, then the phases of all agents do not become larger comparing 
to the protocol without faults.
\end{fact}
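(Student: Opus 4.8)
The plan is to couple the faulty execution with the genuine (fault-free) one by driving both with the \emph{same} sequence of ordered pairs produced by the scheduler, and to prove by induction on the number $t$ of interactions that no agent's phase in the faulty run is ever ahead of its phase in the fault-free run. Write $x_v^{\mathrm{ff}}(t)$ and $x_v^{\mathrm{flt}}(t)$ for the phase of agent $v$ after $t$ interactions in the two runs, and call an interaction \emph{faulty} when its responder merely keeps its phase instead of applying the transition rule. Since all agents start in phase $0$ we have $x_v^{\mathrm{flt}}(0)=x_v^{\mathrm{ff}}(0)$ for every $v$, and the invariant I would maintain is $x_v^{\mathrm{flt}}(t)\leq_m x_v^{\mathrm{ff}}(t)$ for all $v$.

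For the inductive step let the pair at step $t+1$ be $(r,u)$, with responder $r$ and initiator $u$; only $r$ changes its phase, while $u$'s phase is untouched in both runs, so $u$ keeps satisfying the hypothesis for free. The fault-free rule replaces $r$'s phase $x$ by ${\max}_m\{x,y\}$ (if $r$ is a follower) or ${\max}_m\{x,y+_m 1\}$ (if $r$ is a leader), where $y=x_u$; this operation is non-decreasing, $x\leq_m{\max}_m\{x,\cdot\}$, and monotone, in the sense that $a\leq_m a'$ and $b\leq_m b'$ imply ${\max}_m\{a,b\}\leq_m{\max}_m\{a',b'\}$. If interaction $t+1$ is non-faulty, applying this monotone rule to the already-dominated inputs $x_r^{\mathrm{flt}}\leq_m x_r^{\mathrm{ff}}$ and $x_u^{\mathrm{flt}}\leq_m x_u^{\mathrm{ff}}$ (with $+_m 1$ preserving the order) leaves the output dominated. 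If interaction $t+1$ is faulty, $r$'s faulty phase stays put while the fault-free rule never lowers a phase, so $x_r^{\mathrm{flt}}\leq_m x_r^{\mathrm{ff}}\leq_m{\max}_m\{x_r^{\mathrm{ff}},\cdot\}$; either way the invariant is preserved.

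The one genuine subtlety is that $\leq_m$ is only a circular order and is not globally transitive, so the monotonicity statements above are valid only while the phases in question lie on an arc of length below $m/2$; this is where I expect the real work to be. I would discharge it by lifting both runs to the integer \emph{universal cover}, where each phase is a non-decreasing integer, $+_m 1$ becomes $+1$, and ${\max}_m$ becomes ordinary integer $\max$, so that all the monotonicity I need collapses to the trivial monotonicity of the integer maximum. By Lemma~\ref{PC:move} and Theorem~\ref{PC:main}, with $m$ chosen large relative to the window constant $K$, the fault-free phases stay within an arc of width at most $K<m/2$ over the horizon of interest whp, and the faulty phases, being dominated, never run past them; hence the lift is consistent and the circular order faithfully mirrors the integer order. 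Once the lift is in place the induction runs verbatim and yields $x_v^{\mathrm{flt}}(t)\leq_m x_v^{\mathrm{ff}}(t)$ for all $v$ and $t$, which is exactly the assertion that faulty interactions can only hold phases back and never advance them.
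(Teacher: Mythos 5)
The paper states Fact~\ref{PC:fault} without any proof, so there is no official argument to compare yours against; judged on its own merits, your coupling-plus-monotonicity plan is the natural one, and you correctly identify that the only real difficulty is the non-transitivity of $\leq_m$. But your resolution of that difficulty has a genuine gap: the consistency of the lift to the universal cover. For the lifted integer dynamics to project back onto the circular dynamics, \emph{every} interaction in \emph{both} runs must involve two phases lying on a common arc of width less than $m/2$, and for the faulty run you cannot guarantee this. Your justification is that the faulty phases are ``dominated'' by the fault-free ones, but domination only bounds them from \emph{above} (and is itself the inductive conclusion you are trying to establish, so invoking it here is also circular); nothing prevents faulty phases from falling arbitrarily far \emph{behind} --- indeed that is precisely what faults do: an agent all of whose interactions are faulty sits at phase $0$ forever while the rest of the population advances.

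Once the faulty run's phases span more than half the circle, ${\max}_m$ disagrees with the lifted $\max$ and the statement itself becomes delicate. Concretely, take $m=10$, hold agent $w$ at phase $0$ by faults while all other agents advance to phase $6$ (in the fault-free run, driven by the same schedule, everyone including $w$ is at phase $6$). Now let a non-faulty interaction occur with initiator $w$ and follower responder $r$: in the faulty run $r$'s new phase is ${\max}_m\{6,0\}=\min\{6,0\}=0$, which by the paper's definition (``the phase is reduced in absolute terms'') is a pass through $0$; in the fault-free run the same interaction gives ${\max}_m\{6,6\}=6$ and nothing happens. So under the progression / pass-through-$0$ reading --- exactly the reading needed when the Fact is invoked in Section~\ref{LE:sect} to argue that the ordinary clock counts at least $\Theta(n\log n)$ interactions --- the faulty run has jumped \emph{ahead} of the fault-free one, and your invariant fails. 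To repair the argument you must either restrict the fault pattern so that lagging agents also never act as effective initiators (this is what actually happens in the leader-election protocol, where an interaction is void whenever the \emph{initiator} is on a lower level or has $a=1$, so stuck agents never drive anyone's clock), or add to the induction a hypothesis confining the phases of \emph{both} runs to a common half-circle; as written, neither is in place.
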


\section{Forming a junta}\label{FJ:sect}
In this section we describe $\formingjunta$ protocol.
The purpose of this protocol is to rapidly
elect from $n$ agents a junta of $O((n\log n)^{1/2})$ leaders 
assuming each agent utilises $O(\log\log n)$ states. 
This junta of leaders will be used to support phase clocks and 
eventual election of a unique leader.

The states of agents are represented as pairs $(l,a)$ where $a\in\{0,1\}$.
The value $l$ is a non-negative integer which we refer to as {\em level}.
During execution of the protocol agents with $a=0$ do not update their states.
However, any agent $v$ with value $a=1$ increments its level $l$ by $1$ or 
changes its value $a$ to $0$ during all interactions $v$ participates in.
The protocol stabilises when all agents conclude with $a=0$.
The transition function is defined, s.t., on the conclusion of this protocol there are
$O((n\log n)^{1/2})$ agents equipped with the highest computed value $l$ whp.
These agents form the desired {\em junta of leaders}.

All agents start in the same state $(l,a)=(0,1)$.
As agents in states $(l,0)$ do not get updated, we only need to specify how 
to update agents in states $(l,1)$ during pairwise interactions.
The transition function at level $l=0$ differs from levels $l>0$.
%
When an agent in state $(0,1)$ interacts with any agent in state $(0,1)$,
the final state of the initiator is $(1,1)$ and $(0,0)$ of the responder, i.e.,
\[(0,1),(0,1)\rightarrow(0,0),(1,1).
\]
When an agent $v$ in state $(0,1)$ interacts with any agent in state $(l,a),$ 
for levels $l>0,$ or with an agent in state $(0,0)$, 
the resulting state of $v$ is $(0,0)$.
If for any $l>0$ an agent $v$ in state $(l,1),$ participates in an interaction,
its state changes only if $v$ acts as the responder.
If the initiator is in state $(l',a)$ such that $l\leq l'$, 
the responder's state becomes $(l+1,1)$.
If the initiator is in state $(l',a)$ such that $l>l'$, 
the responder's state becomes $(l,0)$.

Let $B_l$ be the number of agents which reach level $l$ during execution of \formingjunta.
The value of $B_l$ depends on the execution thread of the protocol.
We first prove an upper bound on $B_1$.

\begin{lemma}\label{FJ:level0}
For $n$ large enough $1\leq B_1\leq n/2$.
\end{lemma}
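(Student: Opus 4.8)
The plan is to understand exactly how an agent can arrive at level $1$ and then bound the number of such agents by a simple disjointness (double-counting) argument over the pool of $n$ initial agents. First I would record two monotonicity facts that follow immediately from the transition rules. (i) Once an agent reaches a state $(0,0)$ it is never updated again, so it stays ``dead'' forever; this is explicit, since agents with $a=0$ do not update. (ii) Once an agent occupies a level $l\geq 1$, its level never returns to $0$: a level-$l$ agent with $l>0$ changes state only as a responder, and the two possible outcomes are $(l+1,1)$ and $(l,0)$, neither of which lowers the level. In particular no agent can re-enter the state $(0,1)$ after leaving level $0$.

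Next I would characterise promotions to level $1$. The only transition producing a level-$1$ state is $(0,1),(0,1)\rightarrow(0,0),(1,1)$; call such an interaction a \emph{merge}. Hence an agent reaches level $1$ precisely by being the initiator of a merge, and by fact (ii) this happens at most once per agent, so $B_1$ equals the total number of merges that occur during the execution. Each merge promotes one initiator to level $1$ and simultaneously sends its responder to the dead state $(0,0)$.

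For the upper bound I would argue that the $B_1$ merge-initiators and the $B_1$ merge-responders constitute $2B_1$ \emph{distinct} agents, all drawn from the $n$ initial agents. The initiators are pairwise distinct (each reaches level $1$ only once); the responders are pairwise distinct (each is frozen in $(0,0)$ after its merge by fact (i), so it cannot be a responder again); and the two families are disjoint, since an agent that has been a merge-responder is dead and an agent that has been a merge-initiator sits at level $\geq 1$ forever, so by facts (i) and (ii) neither can ever again be in state $(0,1)$ to play the other role. Therefore $2B_1\leq n$, i.e.\ $B_1\leq n/2$.

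Finally, for the lower bound I would use that under the random scheduler at least one interaction occurs, and the very first interaction necessarily pairs two agents still in the common initial state $(0,1)$; this is a merge, so it promotes one agent to level $1$ and gives $B_1\geq 1$ deterministically for $n\geq 2$. The statement is thus essentially deterministic, the hypothesis ``$n$ large enough'' merely guaranteeing that $n/2\geq 1$. The only delicate point, and the part I would write most carefully, is the disjointness bookkeeping of the previous paragraph; it rests entirely on the two monotonicity facts, after which the counting is immediate.
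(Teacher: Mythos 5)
Your proof is correct and takes essentially the same approach as the paper: the paper likewise observes that each agent promoted to level $1$ is paired, in its merge interaction, with a distinct agent that permanently ends in state $(0,0)$, giving $B_1\leq n/2$, and uses the fact that the first interaction is necessarily between two agents in state $(0,1)$ for $B_1\geq 1$. Your version simply makes the disjointness bookkeeping explicit via the two monotonicity facts, where the paper states it more briefly.
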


\begin{proof}
During an interaction of two agents in states $(0,1)$ 
exactly half of the participating agents increase their level $l$ to $1$.
The remaining half ends up in state $(0,0)$ which becomes their final state.
During any other interaction in which an agent $v$ in state $(0,1)$ participates, 
$v$ changes its state to $(0,0)$. 
So at least half of the agents end up in state $(0,0)$.
Finally, since the first interaction of the protocol is between two agents in states $(0,1)$,
so at least one agent results in a state with $l>0$.
\end{proof}

Due to the reduction property of the protocol we have $B_1\geq B_2\geq B_3\geq B_4\geq\ldots$,
and in turn there exists the last $L$ for which value $B_L>0$.
We prove that $L=O(\log\log n)$ and in turn $B_L=O(\sqrt{n\log n}).$
We obtain this result by limiting values of $B_l,$ for all $l>1$.

\begin{lemma}\label{FJ:reductB}
Assume $n^{-1/3}\leq A<1$ and $B_l\leq A\cdot n$, 
then $B_{l+1}\leq \frac{11}{10} A^2\cdot n$ whp $1-e^{-n^{1/3}/300}$.
\end{lemma}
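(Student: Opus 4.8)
The plan is to trace exactly how an agent reaches level $l+1$ and then to count such promotions. An agent enters level $l+1$ only as a responder: it must sit in state $(l,1)$ and meet an initiator whose level is at least $l$, after which it becomes $(l+1,1)$. The key structural remark is that once an agent arrives at state $(l,1)$ (either by promotion from level $l-1$, or, for $l=1$, by initiating a collision of two level-$0$ active agents) its state is frozen until the first interaction in which it acts as the responder, since for $l>0$ only responders update. I would call that first responder-interaction the agent's \emph{deciding interaction}: there the agent is either promoted to $(l+1,1)$ (if the initiator is at level $\ge l$) or killed to $(l,0)$ (otherwise). Hence each of the at most $B_l$ agents reaching level $l$ has at most one deciding interaction, and $B_{l+1}$ equals the number of deciding interactions that result in a promotion.

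Next I would bound the per-interaction promotion probability. Since any agent that is ever at level $\ge l$ must previously have passed through level $l$, the number of agents at level $\ge l$ at any moment never exceeds $B_l\le A\cdot n$. Therefore, conditioned on the configuration just before a deciding interaction of an agent $v$, the initiator is uniform over the remaining $n-1$ agents, of which at most $An-1$ lie at level $\ge l$, so the probability that $v$ is promoted is at most $An/(n-1)$. With at most $B_l\le An$ deciding interactions, each promoting with conditional probability at most $An/(n-1)\approx A$, the expected number of promotions is about $A^2n$, matching the target up to the slack factor $\tfrac{11}{10}$.

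The last step is concentration, and here the main obstacle is that the promotion events are not independent: the set of level-$\ge l$ agents evolves as the execution unfolds, and the hypothesis $B_l\le An$ refers to the whole run rather than to a single step. I would resolve this exactly as in the proof of Lemma~\ref{PC:techord}: process the deciding interactions in time order and build an auxiliary $0$-$1$ sequence which, whenever the true promotion probability $p_\iota$ falls below the fixed value $An/(n-1)$, is padded with an extra $1$ with probability $(An/(n-1)-p_\iota)/(1-p_\iota)$. On the event $B_l\le An$ the uniform bound $p_\iota\le An/(n-1)$ holds at every step, so the padded sequence has independent entries, each equal to $1$ with fixed probability $An/(n-1)$, and its number of $1$s dominates $B_{l+1}$.

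Then I apply Chernoff to this independent sequence. Its mean $\mu$ satisfies $\mu\approx A^2 n$; since $A\ge n^{-1/3}$ we have $\mu\ge A^2n\ge n^{1/3}$. A Chernoff bound with relative deviation $\delta\approx\tfrac1{10}$ then gives $\Pr[B_{l+1}>\tfrac{11}{10}A^2n]\le\exp(-\delta^2\mu/3)\le\exp(-n^{1/3}/300)$, which is the claimed failure probability. The inequality $B_1\le n/2$ from Lemma~\ref{FJ:level0} supplies a starting ratio $A<1$, so this lemma will later be iterated along the squaring recursion $A_{l+1}\le\tfrac{11}{10}A_l^2$ to drive the survivor count down doubly exponentially toward $O(\sqrt{n\log n})$.
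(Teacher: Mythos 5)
Your proof is correct and takes essentially the same route as the paper's own proof: the same decomposition into each level-$l$ agent's single deciding (first responder) interaction, the same padding of these interactions into a sequence of independent Bernoulli trials with success probability $\approx A$ whose number of $1$s dominates $B_{l+1}$, and the same Chernoff bound with relative deviation $\approx \tfrac{1}{10}$ and mean at least $A^2 n \ge n^{1/3}$ yielding the $e^{-n^{1/3}/300}$ failure probability. (Incidentally, your padding probability $\bigl(An/(n-1)-p_\iota\bigr)/(1-p_\iota)$ is the correct form; the paper's stated $(1-A)/(1-p_\iota)$ is evidently a typo for $(A-p_\iota)/(1-p_\iota)$, the formula it uses in its analogous phase-clock lemmas.)
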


\begin{proof}
An agent $v$ contributing to value $B_l$ results in state $(l,1)$ as soon as it 
gets to level $l$ during the relevant interaction $t$.
Consider the first interaction $\iota$ succeeding $t$ in which $v$ acts as the responder.
During this interaction the initiator is on level $l'\geq l$ with probability $p_{\iota}\leq B_l/n\leq A.$
Thus $v$ moves to level $l+1$ with probability at most $A$ as
otherwise the responder would end up in state $(l,0)$ and would not contribute to $B_{l+1}$. 
Consider now the sequence of all $B_l$ interactions $\iota$, in which agents in state $(l,1)$ act as responders.
We can attribute to this sequence a binary $0$-$1$ sequence $\sigma$ of length $B_l$, s.t.,
if during interaction $\iota$ an agent ends up in state $(l+1,1)$ the respective entry in $\sigma$ becomes 1.
Otherwise, this entry becomes 1 with probability $(1-A)/(1-p_{\iota})$ and 0 with probability $A/(1-p_{\iota})$.
Thus the probability of each entry being $1$ is independently equal to $A$ and
the number of $1$s in $\sigma$ is at least $B_{l+1}$. The expected number of these $1$s is $A\cdot B_l\leq A^2 n$.
By the Chernoff bound $B_{l+1}> \frac{11}{10}A^2 \cdot n$ with probability at most $e^{-A^2 n/300}<e^{-n^{1/3}/300}$.
\end{proof}

\begin{lemma}\label{FJ:zeroB}
If $B_l\leq n^{1/3}$ we get $B_{l+1}>0$ with probability at most $n^{-1/3}$.
\end{lemma}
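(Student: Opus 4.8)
The plan is to bound the probability that even a single agent is promoted from level $l$ to level $l+1$, by a union bound over the at most $B_l$ agents that ever occupy state $(l,1)$. First I would recall, exactly as in the proof of Lemma~\ref{FJ:reductB}, that an agent $v$ contributes to $B_{l+1}$ only if, while in state $(l,1)$, it acts as the responder in some interaction whose initiator sits on a level $l'\geq l$; otherwise the transition rule sends $v$ into $(l,0)$, where it is frozen. Hence it suffices to control, for each of the $B_l$ candidates, the first interaction in which it plays the responder after reaching level $l$.

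The key quantity is the density of admissible initiators. Since an agent can be on a level $\geq l$ only if it previously reached level $l$, and exactly $B_l$ agents ever reach level $l$, at no moment are there more than $B_l$ agents in states $(l',\cdot)$ with $l'\geq l$. Consequently, in the designated first-responder interaction of a fixed candidate $v$, the probability that the initiator lies on a level $l'\geq l$ --- the only way $v$ is promoted to $(l+1,1)$ --- is at most $B_l/n$. Using the hypothesis $B_l\leq n^{1/3}$, this per-candidate promotion probability is at most $n^{1/3}/n=n^{-2/3}$.

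Finally I would assemble these estimates by the union bound over the at most $B_l\leq n^{1/3}$ candidates:
\[
\Pr[B_{l+1}>0]\;\leq\;B_l\cdot\frac{B_l}{n}\;\leq\;\frac{\left(n^{1/3}\right)^2}{n}\;=\;n^{-1/3}.
\]
The only delicate point is the conditioning: the promotion events of distinct candidates are not independent, and the number of agents on a level $\geq l$ evolves as interactions proceed. I would treat this exactly as in Lemma~\ref{FJ:reductB}, by fixing for each candidate its first responder interaction and bounding the initiator-level probability by $B_l/n$ uniformly in time; the union bound then requires no independence, so no further computation is needed.
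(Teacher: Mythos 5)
Your proof is correct and takes essentially the same approach as the paper's: bound the per-candidate promotion probability by $B_l/n\leq n^{-2/3}$ (since at most $B_l$ agents ever occupy a level $\geq l$, any such agent having passed through level $l$ first), then apply the union bound over the at most $B_l\leq n^{1/3}$ candidates. The paper's proof is merely a terser rendering of exactly this argument, and your extra remarks on conditioning and the irrelevance of independence only make explicit what the paper leaves implicit.
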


\begin{proof}
If $B_l\leq n^{1/3},$ the probability for any agent on level $l$ to
get to level $l$ is at most $n^{-2/3}$. Thus by the Union bound the probability
of some agent getting to level $l$ is at most $n^{-1/3}$.
\end{proof}

\begin{lemma}\label{FJ:lastsmall}
There exists a constant $c>0,$ for which if $B_l\geq c\sqrt{n\log n},$
the probability of $B_{l+1}=0$ is negligible.
\end{lemma}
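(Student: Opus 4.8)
The plan is to reduce the statement to a collision-type estimate over the moments at which level-$l$ agents are first forced to react. Recall from the transition function that an agent which reaches level $l$ sits in state $(l,1)$ until the first interaction in which it acts as the responder; call this its \emph{decisive interaction}. In that interaction it advances to $(l+1,1)$ precisely when its initiator is at level $\geq l$, and otherwise it drops to $(l,0)$. Two features make the analysis tractable: levels never decrease, so an agent that once reached level $l$ stays at level $\geq l$ forever; and the dropped agents $(l,0)$ still sit at level $l$, hence still serve as ``catalysts'' that can push other responders upward. Since there are exactly $B_l$ agents that reach level $l$, there are exactly $B_l$ decisive interactions, and $B_{l+1}=0$ holds if and only if every one of them meets an initiator of level strictly below $l$.

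First I would order the decisive interactions by time, writing $v_{(1)},v_{(2)},\ldots,v_{(B_l)}$ for the responders, and let $E_j$ be the event that $v_{(j)}$ fails to advance. The key observation is a conditional bound: on the event $E_1\cap\cdots\cap E_{j-1}$ the agents $v_{(1)},\ldots,v_{(j-1)}$ have all dropped to state $(l,0)$, so at the instant of $v_{(j)}$'s decisive interaction there are at least $j-1$ agents at level $\geq l$ other than $v_{(j)}$ itself. Because the initiator is a fresh agent chosen uniformly among the remaining $n-1$, it lands on one of these catalysts with probability at least $(j-1)/(n-1)$, in which case $v_{(j)}$ advances. Hence
\[
\Pr\bigl(E_j \mid E_1\cap\cdots\cap E_{j-1}\bigr)\ \leq\ 1-\frac{j-1}{n-1}.
\]
Multiplying these conditional probabilities and using $1-x\leq e^{-x}$ gives
\[
\Pr(B_{l+1}=0)\ \leq\ \prod_{j=1}^{B_l}\Bigl(1-\frac{j-1}{n-1}\Bigr)\ \leq\ \exp\Bigl(-\frac{1}{n-1}\binom{B_l}{2}\Bigr).
\]
Substituting $B_l\geq c\sqrt{n\log n}$ makes the exponent at least $\tfrac{c^2}{2}\log n\,(1-o(1))$, so the bound is $n^{-\Omega(c^2)}$, which is negligible once $c$ is chosen larger than $\sqrt{2\eta}$. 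This is exactly the ``birthday'' phenomenon: at the scale $B_l=\Theta(\sqrt{n\log n})$ the quantity $B_l^2/n$ crosses $\Theta(\log n)$, the threshold that separates Lemma~\ref{FJ:zeroB} (few agents, almost surely no advancement) from the present regime.

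The step I expect to require the most care is justifying the product bound despite the decisive interactions being random times rather than fixed ones. The point to stress is that the conditional estimate above holds pointwise: conditioned on the entire history up to $v_{(j)}$'s decisive interaction and on $E_1\cap\cdots\cap E_{j-1}$, the configuration contains at least $j-1$ level-$\geq l$ catalysts, and the initiator of that specific interaction is still uniform over the other $n-1$ agents, independent of that history. I would also remark that one may halt the product at the first success, so the events $E_j$ need only be defined on the no-advancement branch, on which indeed all $B_l$ decisive interactions occur and $v_{(1)},\ldots,v_{(j-1)}$ are genuinely dropped catalysts; this sidesteps any concern about whether the full sequence of decisive interactions materialises when an advancement happens early.
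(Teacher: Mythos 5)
Your proof is correct, and it rests on the same birthday-paradox mechanism as the paper's, but it organizes the computation through a different decomposition. The paper splits the agents reaching level $l$ into two halves: the first $c\sqrt{n\log n}/2$ arrivals form a static pool of catalysts (they stay at level $\geq l$ whether they advanced or dropped), each of the later $c\sqrt{n\log n}/2$ arrivals advances at its decisive interaction with probability at least $c\sqrt{\log n/4n}$, and the failure probabilities are multiplied after asserting that these trials are independent, giving $\left(1-c\sqrt{\log n/4n}\right)^{c\sqrt{n\log n}/2}<n^{-c^2/4}$. You instead order all $B_l$ decisive interactions in time and condition sequentially, using only the previously dropped agents as a growing pool of catalysts, which yields the triangular product $\prod_{j=1}^{B_l}\bigl(1-\tfrac{j-1}{n-1}\bigr)\leq\exp\bigl(-\tbinom{B_l}{2}/(n-1)\bigr)\leq n^{-c^2(1-o(1))/2}$. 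The trade-off: the paper's two-block split gives a uniform per-trial bound and a one-line computation, but its clause ``since all these agents advance to level $l+1$ independently'' is stated loosely --- the trials are not literally independent, and a rigorous version needs exactly the pointwise history-conditioning and tower-property argument you spell out in your final paragraph; in this respect your version is the more airtight one, and using all $B_l$ agents rather than half of them also improves the exponent from $c^2/4$ to $c^2/2$. One caveat worth a sentence in your write-up: the dynamics you describe (a responder in $(l,1)$ advances iff its initiator has level $\geq l$, and otherwise drops to $(l,0)$) hold only for $l\geq 1$, since level $0$ has different transition rules; this is harmless because for $l=0$ the conclusion is immediate from Lemma \ref{FJ:level0} and the lemma is only ever invoked at the top level $L\geq 1$, but the restriction should be made explicit.
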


\begin{proof}
Consider a group of $c\sqrt{n\log n}/2$ agents which move to level $l$ after
this level is already reached by $c\sqrt{n\log n}/2$ other agents.
Any agent in this group moves to level $l+1$ with probability at least $c\sqrt{\log n/4n}.$
Since all these agents advance to level $l+1$ independently,
the probability that $B_{l+1}=0$ is at most
\[ \left(1-c\sqrt{\log n/4n}\right)^{c\sqrt{n\log n}/2}<
      e^{-c^2\log n/4}<n^{-c^2/4}.
\]
This last value is smaller than $n^{-\eta},$ for $c$ large enough. 
\end{proof}

\begin{theorem}\label{FJ:main}
In protocol $\formingjunta$ the largest level $L$ for which $B_L>0$ 
satisfies $L=\log\log n+c$ for some constant $c$ and $B_L=O(\sqrt{n\log n})$ whp.
\end{theorem}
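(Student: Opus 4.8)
The plan is to control the non-increasing sequence $B_1 \ge B_2 \ge \cdots$ entirely through the quadratic contraction supplied by Lemma~\ref{FJ:reductB}, reading off both the number of levels and the size of the final junta from the resulting doubly-exponential decay. Writing $A_l = B_l/n$, Lemma~\ref{FJ:level0} gives the base value $A_1 \le 1/2$, and Lemma~\ref{FJ:reductB} yields the recursion $A_{l+1} \le \tfrac{11}{10}A_l^2$ as long as $A_l \ge n^{-1/3}$. The clean way to solve this is to linearise it: set $y_l = \tfrac{11}{10}A_l$, so that the recursion becomes $y_{l+1} \le y_l^2$ and hence $y_l \le y_1^{\,2^{\,l-1}}$ with $y_1 \le \tfrac{11}{20} < 1$. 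Thus $A_l$ contracts doubly exponentially, and solving $A_l \approx n^{-1/3}$ gives $2^{\,l-1} = \Theta(\log n)$, i.e.\ the threshold level is $l = \log\log n + O(1)$ (the unspecified base of the iterated logarithm and the constants $\tfrac{11}{20}$, $\tfrac{10}{11}$ all disappear into the additive constant $c$). This already pins the level count, which is the quantity that matters for the $O(\log\log n)$ state bound.

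Once $A_l$ has been driven below $n^{-1/3}$, i.e.\ $B_l \le n^{2/3}$, a few more levels bring $B_l$ below $n^{1/3}$, at which point Lemma~\ref{FJ:zeroB} guarantees $B_{l+1}=0$ with probability $1-n^{-1/3}$; this is where the process stops, giving $L \le \log\log n + O(1)$. For the size of the junta I would invoke Lemma~\ref{FJ:lastsmall} in contrapositive form: since $B_{L+1}=0$ by the very definition of $L$, and Lemma~\ref{FJ:lastsmall} says that $B_l \ge c\sqrt{n\log n}$ forces $B_{l+1}>0$ except with negligible probability, we must have $B_L < c\sqrt{n\log n}$, i.e.\ $B_L = O(\sqrt{n\log n})$ whp. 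Finally I would collect the error terms: a union bound over the $O(\log\log n)$ applications of Lemma~\ref{FJ:reductB} (each failing with probability $e^{-n^{1/3}/300}$), together with the single termination step, is negligible, the binding contribution being the $n^{-1/3}$ coming from Lemma~\ref{FJ:zeroB}.

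The awkward points I expect to fight with are two. First, between the regime where Lemma~\ref{FJ:reductB} applies ($B_l \ge n^{2/3}$) and the regime where Lemma~\ref{FJ:zeroB} applies ($B_l \le n^{1/3}$) there is a polynomial gap, and because of the stray factor $\tfrac{11}{10}$ a single contraction step can leave $B$ just above $n^{1/3}$; closing this cleanly needs either a sharpened constant or one extra ad hoc reduction step, but it only costs $O(1)$ additional levels so it cannot affect the $\log\log n + O(1)$ conclusion. Second, and more seriously, the equality $L = \log\log n + c$ asks for a matching \emph{lower} bound on $L$, whereas all four lemmas push in the upper direction only; to show that $B_l$ cannot collapse appreciably faster than doubly exponentially I would need a companion of Lemma~\ref{FJ:reductB} of the form $B_{l+1} \ge \tfrac12 B_l^2/n$ whp while $B_l$ stays polylogarithmically large. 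Proving such a reverse bound is the genuine obstacle, since it requires a lower estimate on the advancement probability $p_\iota$, and hence control of how many agents sit simultaneously at levels $\ge l$ at the moment a given level-$l$ agent first responds --- information the upper-bound argument, which simply uses $p_\iota \le B_l/n$, never needs.
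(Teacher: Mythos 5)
Your proposal follows the paper's proof step for step: Lemma~\ref{FJ:level0} for the base case, iterated applications of Lemma~\ref{FJ:reductB} for the doubly exponential decay (the paper unrolls the recursion explicitly, $B_l\leq(\frac{11}{10})^{2^l-1}\cdot n/2^{2^l}$, rather than linearising it --- a cosmetic difference), a constant number of applications of Lemma~\ref{FJ:zeroB} to force the process to die out, and Lemma~\ref{FJ:lastsmall} in contrapositive to get $B_L=O(\sqrt{n\log n})$. So: correct in substance, and essentially the same route.

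Two remarks on the obstacles you raised. The ``matching lower bound'' on $L$ is indeed not provable from the four lemmas, but it is not proved by the paper either: the paper's argument yields only $L<L''=\log\log n+1+c$ whp and never shows $L\geq\log\log n-O(1)$, and the statement is used downstream only as an upper bound (for the $O(\log\log n)$ state count and the junta size); so this is a looseness of the statement, not a gap you were expected to close, and you need not construct the reverse recursion $B_{l+1}\gtrsim B_l^2/n$. On the error accounting, you are slightly off in the other direction: a single application of Lemma~\ref{FJ:zeroB} gives only success probability $1-n^{-1/3}$, which is weaker than whp once $\eta>1/3$; the paper instead applies the lemma at $c=\Theta(\eta)$ consecutive levels (taking $L''=L'+c$), driving the termination failure probability down to $n^{-c/3}\leq n^{-\eta}$, and this is precisely why the additive constant $c$ in the statement depends on $\eta$. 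Finally, your handling of the stray $\frac{11}{10}$ factor (one extra ad hoc step, or noting that the proof of Lemma~\ref{FJ:zeroB} tolerates $B_l\leq Cn^{1/3}$ for any constant $C$) is the right fix; the paper silently glosses over this point by asserting $B_{L'}\leq n^{1/3}$ directly from the iteration.
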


\begin{proof}
By Lemma \ref{FJ:level0} we have $B_1\leq n/2$.
By Lemma \ref{FJ:reductB} we conclude $B_2\leq \frac{11}{10}\cdot\frac{n}{4}$ whp $1-e^{-n^{1/3}/300}$.
Furthermore, $B_3\leq (\frac{11}{10})^3\cdot \frac{n}{2^4}$ whp $1-2e^{-n^{1/3}/300}$.
And in general $B_l\leq (\frac{11}{10})^{2^l-1}\cdot {n/2^{2^l}}$ whp $1-le^{-n^{1/3}/300}$.
Thus for $L'=\log\log n+1$ we get $B_{L'}\leq n^{1/3}$ whp.
Further on by
Lemma \ref{FJ:zeroB} the value $B_{L''},$ where $L''=L'+c$, equals $0,$ 
for some constant $c,$ whp, and we have $L<L''$.
By Lemma \ref{FJ:lastsmall} on the last level $L$ for which $B_L>0$
we have $B_L=\O(\sqrt{n\log n})$ whp.
Thus both conditions hold whp.
\end{proof}

The last lemma bounds from above the running time of protocol $\formingjunta$.

\begin{lemma}\label{FJ:time}
The protocol $\formingjunta$ stabilises in $O(n\log n)$ iteractions whp.
\end{lemma}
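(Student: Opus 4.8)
The plan is to show that every agent reaches the inactive state $a=0$ within $O(n\log n)$ interactions whp; since agents with $a=0$ never update again, this is exactly stabilisation. The argument rests on two ingredients established earlier: by Theorem~\ref{FJ:main} the largest attained level satisfies $L=O(\log\log n)$ whp, and the transition rules force the level of an active agent to be non-decreasing. First I would fix the (whp) event that the maximal level equals some $L=O(\log\log n)$ and argue that, conditioned on it, each individual agent becomes inactive after only $O(\log\log n)$ of the interactions in which it acts as the responder. Then I would invoke a Chernoff bound to guarantee that in $cn\log n$ interactions every agent has acted as the responder $\Theta(\log n)$ times, which is far more than the $O(\log\log n)$ responses it can possibly need, and finish with a union bound over the $n$ agents.

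For the per-agent lifetime bound, trace a single agent $v$. In its very first interaction the level-$0$ rules send $v$ either to the inactive state $(0,0)$ or, if $v$ is the initiator meeting another $(0,1)$ agent, to level $1$; so after one interaction $v$ is inactive or sits at some level $\geq 1$. From level $1$ onward $v$ changes state \emph{only} when it is the responder, and each such response either increments its level by exactly one (initiator at level $\geq l$) or switches it to $(l,0)$ (initiator at level $<l$). Hence the responder-events of $v$ form a run of advances terminated by a single deactivation, and because $v$'s level can never exceed $L$ the run contains at most $L$ advances. The only delicate case is $v$ reaching level $L$: there a further advance would create a level-$(L+1)$ agent, i.e.\ force $B_{L+1}>0$, contradicting the conditioning; so on its next response $v$ necessarily meets a lower initiator and deactivates. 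Thus $v$ becomes inactive after at most $L+1=O(\log\log n)$ of its own responder interactions.

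It remains to bound the number of interactions needed for every agent to accumulate that many responses. In each scheduled interaction a fixed agent is the responder with probability $1/n$, so over $cn\log n$ interactions its expected number of responses is $c\log n$. By the Chernoff bound the probability that it responds fewer than $L+1=O(\log\log n)$ times is at most $n^{-\eta-1}$ once $c$ is chosen large enough, and a union bound over the $n$ agents makes the total failure probability at most $n^{-\eta}$. Adding the (whp) failure probability inherited from Theorem~\ref{FJ:main} and rescaling $\eta$, we conclude that after $O(n\log n)$ interactions all agents are inactive whp, i.e.\ the protocol has stabilised.

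The main obstacle is the per-agent lifetime bound of the second paragraph: one must argue that an active agent cannot stall and that its total number of state changes is capped by the maximal level rather than merely by the population size. This is exactly where the cap $L=O(\log\log n)$ from Theorem~\ref{FJ:main} and the monotonicity of levels are indispensable, and where the conditioning on $B_{L+1}=0$ must be handled carefully so that a level-$L$ agent is forced to deactivate instead of advancing.
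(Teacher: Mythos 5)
Your proof is correct, but it follows a genuinely different route from the paper's. You cap the \emph{per-agent} work: invoking Theorem~\ref{FJ:main} to get a maximal level $L=O(\log\log n)$ whp, you note that every responder interaction of an active agent either advances its level by one or freezes it, so each agent deactivates within its first $L+1$ responder events; a Chernoff bound then gives every agent $\Theta(\log n)\gg L+1$ responder events within $cn\log n$ interactions, and a union bound finishes. The paper's proof never uses the level cap at all. It relies only on Lemma~\ref{FJ:level0}: at least $n/2$ agents reach the absorbing state $(0,0)$ at their very first interaction, and all first interactions occur within $O(n\log n)$ interactions whp; after that, any still-active agent (at whatever level) is deactivated as soon as it acts as responder to a $(0,0)$ initiator, an event of probability at least $\frac{1}{4n}$ in each interaction, so a second $O(n\log n)$ window suffices whp. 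The paper's route is more elementary and keeps Lemma~\ref{FJ:time} independent of Theorem~\ref{FJ:main}, which also lets it transfer directly to the spoiled protocol of Corollary~\ref{FJ:final}; your route buys the sharper structural fact that each agent changes state only $O(\log\log n)$ times before freezing, at the cost of inheriting Theorem~\ref{FJ:main}'s failure probability. One point to phrase carefully: do not literally condition the Chernoff bound on the level-cap event, since both events are functions of the same scheduler randomness; instead intersect the two whp events --- which is what your final add-the-failure-probabilities-and-rescale-$\eta$ step amounts to, and which is valid because the ``at most $L+1$ responder events until deactivation'' claim holds pathwise on the level-cap event.
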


\begin{proof}
Recall from Lemma \ref{FJ:level0} that $B_1\leq n/2$ 
and the number of agents with the final state $(0,0)$ is at least $n/2$.
Each agent in this group ends up in this state during its first interaction.
Since every agent interacts at least once during the first $O(n\log n)$ 
interactions of the protocol whp,
all agents ending up in state $(0,0)$ do so during this time whp.
One can show that an agent does not experience an interaction 
during the first $cn\ln n$ interactions, for a constant $c,$ with probability 
\[\left(1-\frac{2}{n}\right)^{cn\ln n}\leq n^{-2c}. 
\]
Thus there exists a positive constant $c$ for which after $cn\ln n$ 
interactions each agent experiences its first interaction whp $1-n^{-\eta-1}$.
Any agent that interacts as the responder with an agent in state $(0,0)$
sets its value $a$ to $0$ which concludes the transition process. 
And after at least $n/2$ agents are in state $(0,0)$, the probability that the current interaction 
is one of such interactions w.r.t. a particular responder is at least $\frac{1}{4n}$.
Thus the probability that a given agent does not have $a=0$ after $c'n\ln n$ iterations is
\[\left(1-\frac{1}{4n}\right)^{c'n\ln n}\leq n^{-c'/4}, 
\]
and for the constant $c'$ big enough $n^{-c'/4}<n^{-\eta-1}$.
Thus the number of interactions needed to obtain $a=0$ in all agents is $O(n\log n)$ whp.
\end{proof}

Finally we prove a corollary stating that ``spoiling'' (for the definition check below) protocol $\formingjunta$
does not affect validity of statements of Theorem \ref{FJ:main} and Lemma~\ref{FJ:time}.
Using the notion of a spoiled protocol instead of the flawless one is needed to bound
the total number of states in the leader election protocol to $O(\log\log n)$.
Let {\em spoiled} $\formingjunta$ protocol be any protocol obtained by 
changing spontaneously some 
states  from $(l,a)$ to $(0,0)$, where $l$ is not the highest level reached so far in the population.
We denote the total numbers of agents that reach level $l$ in this spoiled protocol by $B^*_l,$
and the highest level for which $B^*_l>0$ by $L^*$.
Observe that in the spoiled protocol all agents at level $L^*$ never go through state $(0,0)$.

\begin{corollary}\label{FJ:final}
Level $L^*$ satisfies the condition $L^*=O(\log\log n)$ and $B^*_{L^*}=O(\sqrt{n\log n})$ whp.
Also spoiled $\formingjunta$ protocol stabilises after $O(n\log n)$ interactions whp.
\end{corollary}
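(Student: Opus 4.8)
The plan is to observe that spoiling only ever \emph{removes} active agents---sending them to the terminal state $(0,0)$---and that every quantitative bound behind Theorem~\ref{FJ:main} and Lemma~\ref{FJ:time} is an \emph{upper} bound on how many agents climb to the next level (or how slowly the active population dies out). Hence each such bound survives the substitution $B_l\mapsto B^*_l$. First I would record the monotonicity underpinning Lemma~\ref{FJ:reductB}: since an active agent never returns to a level it has left (it either advances or falls into a sink state), the number of agents ever at level $\geq l$ equals $B^*_l$, so at any instant the probability that a fixed responder at level $l$ meets an initiator at level $\geq l$ is still at most $B^*_l/n$; and because spoiling can only convert a would-be climber into $(0,0)$, it can never increase the count reaching level $l+1$. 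The same remark applies to the extinction estimate of Lemma~\ref{FJ:zeroB}. Thus the recursion ``$B^*_l\leq An\Rightarrow B^*_{l+1}\leq\frac{11}{10}A^2 n$'' holds verbatim whp.

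Feeding this recursion the starting bound $B^*_1\leq n/2$---Lemma~\ref{FJ:level0}, whose proof merely counts the $\geq n/2$ agents sent to $(0,0)$ and is only reinforced by extra spoiling---reproduces the doubly-exponential decay of Theorem~\ref{FJ:main}, giving $B^*_{L'}\leq n^{1/3}$ for $L'=\log\log n+1$ whp. Applying Lemma~\ref{FJ:zeroB} over a further constant number of levels then forces extinction, so $L^*<L'+c=O(\log\log n)$, with the accumulated failure probability controlled by the same union bound as in Theorem~\ref{FJ:main}.

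For the size of the top junta I would exploit the observation recorded just before the statement, that \emph{agents reaching the top level $L^*$ are never spoiled}: whenever an agent sits at level $L^*$ this is the highest level attained so far, so it is ineligible for spoiling. Consequently the mechanism promoting an $L^*$-agent to level $L^*+1$ is untouched, and Lemma~\ref{FJ:lastsmall} applies directly at level $L^*$. Were $B^*_{L^*}\geq c\sqrt{n\log n}$, the $\Theta(\sqrt{n\log n})$ unspoiled agents sitting at level $L^*$ would, acting as initiators for one another, promote some responder to level $L^*+1$ with all but negligible probability, contradicting the maximality of $L^*$. Hence $B^*_{L^*}=O(\sqrt{n\log n})$ whp.

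For the running time I would reuse the argument of Lemma~\ref{FJ:time}: spoiling only accelerates agents entering $(0,0)$, so the estimates that $\Omega(n)$ agents reach $(0,0)$ and that every agent experiences its first interaction within $O(n\log n)$ interactions whp are unchanged, and the remaining agents acquire $a=0$ within $O(n\log n)$ further interactions by meeting one of the $\geq n/2$ terminal agents as responders (an $L^*$-agent instead meets a lower-level initiator). I expect the only delicate point to be the top-level step: one must rule out that spoiling elsewhere could artificially preserve the maximality of $L^*$, and this is precisely what the ``top level is never spoiled'' observation guarantees, allowing Lemma~\ref{FJ:lastsmall} to be invoked without modification.
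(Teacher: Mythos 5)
Your proposal is correct and follows essentially the same route as the paper's (much terser) proof: domination of the spoiled counts $B^*_l$ by the flawless bounds so that Theorem~\ref{FJ:main}'s recursion and extinction argument carry over, the observation that top-level agents are never spoiled so Lemma~\ref{FJ:lastsmall} applies verbatim to bound $B^*_{L^*}$, and the fact that spoiling only accelerates entry into $(0,0)$ so Lemma~\ref{FJ:time}'s running-time bound is preserved. You merely spell out the monotonicity and coupling details that the paper leaves implicit.
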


\begin{proof}
The numbers of agents $B^*_l$ reaching level $l$ in the spoiled protocol are not 
larger respectively than numbers $B_l$ from the flawless protocol, thus $L^*=O(\log\log n)$.
Also Lemma \ref{FJ:lastsmall} still bounds from above $B^*_{L^*}$ by $O(\sqrt{n\log n})$ whp.
Thus the running time of the spoiled protocol is not larger than the flawless one.
\end{proof}

\section{Leader election}\label{LE:sect}
\newcommand{\draw}{\mbox{\tt draw}}
\newcommand{\spread}{\mbox{\tt spread}}

In this section we describe how to combine the protocols described in the two previous 
sections to obtain a new fast population protocol for leader election.
This new leader election protocol operates in (parallel) time $\Theta(\log^2 n)$ on 
populations with agents equipped with $\Theta(\log\log n)$ states.

The new leader election protocol assumes that at the beginning 
there is a non-empty subset (possibly the whole population) of agents 
which are candidates for leaders, and this subset is gradually reduced to a singleton.
The protocol consists of $\Theta(\log n)$ repetitions of the external loop, each formed 
of $\Theta(n\log n)$ interactions controlled by the ordinary mode of the phase clock.
This emulates a leader election scheme starting from a set of leader candidates in which
during each repetition every candidate picks independently at random 
either bit $0$ or $1$ by tossing a fair coin.
In real terms, the coin tossing process relies on the initiator versus responder 
selection performed by the random scheduler.
The candidates which pick $1$ broadcast message "$1$" to all other agents.
And when a candidate with chosen $0$ receives message "$1$" 
it stops being a candidate for the leader.

\begin{theorem}\label{LE:scheme}
The scheme proposed above selects a unique leader 
during $\Theta(\log n)$ repetitions whp.
\end{theorem}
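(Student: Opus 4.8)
The plan is to treat the scheme as a purely probabilistic reduction process and to analyse how fast the set of candidates shrinks to a singleton, leaving the physical realisation of a well-separated round (fair independent coin tosses via the scheduler, and complete propagation of any tossed $1$ by one-way epidemic) to the machinery already in place, namely Theorem~\ref{PC:main} for the $\Theta(\log n)$ external-loop repetitions and Theorem~\ref{PR:oneway} for the broadcast inside each repetition. Let $C_t$ be the number of candidates after $t$ repetitions, with $C_0\le n$. First I would record the two structural facts that drive everything. The process never empties: if every candidate tosses $0$, no $1$ is propagated and all candidates are retained, so $C_{t+1}\ge 1$ always, which also secures correctness. Moreover $C_{t+1}\le C_t$, since the survivors are exactly the candidates that tossed $1$ (at most $C_t$ of them) unless all tossed $0$, in which case the count is unchanged; and $C_t=1$ is absorbing, as a lone candidate is retained regardless of its coin. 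Thus $C_t$ is non-increasing, bounded below by $1$, and absorbed at $1$, so it remains only to bound the absorption time.

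The key estimate I would establish is a single uniform one-step shrinking bound: for every $c\ge 2$,
\[ \Pr\!\left[\,C_{t+1}\le \lfloor c/2\rfloor \ \big|\ C_t=c\,\right]\ \ge\ \tfrac14 . \]
This holds because, conditioned on not all coins being $0$, the survivor count is $\mathrm{Bin}(c,\tfrac12)$, whose median is $c/2$, so the count lands in the range $[1,\lfloor c/2\rfloor]$ with probability at least $\tfrac12-2^{-c}\ge\tfrac14$. Introducing the potential $\Phi_t=\lceil\log_2 C_t\rceil$, I would then observe that every such halving event lowers $\Phi$ by at least one, that $\Phi$ is non-increasing, that $\Phi_0\le\lceil\log_2 n\rceil$, and that $\Phi_t=0$ holds exactly when $C_t=1$.

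The final step converts these history-dependent per-round bounds into one concentration statement. Since the $\tfrac14$ bound holds for every history with $C_t\ge 2$, I would couple the process from below with independent $\mathrm{Bernoulli}(\tfrac14)$ variables $Y_1,\dots,Y_T$ so that, as long as the count exceeds $1$, the number of halvings is at least $\sum_t Y_t$. Hence not being absorbed by round $T$ forces fewer than $\lceil\log_2 n\rceil$ successes among $Y_1,\dots,Y_T$, giving $\Pr[C_T\ge 2]\le\Pr[\mathrm{Bin}(T,\tfrac14)<\lceil\log_2 n\rceil]$. Choosing $T=\Theta(\log n)$ with a large enough constant makes the mean $T/4$ a large multiple of $\log_2 n$, and a Chernoff bound pushes this probability below $n^{-\eta}$, so a unique leader is selected within $\Theta(\log n)$ repetitions whp.

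The hard part will be the passage from the conditional, history-dependent shrinking probabilities to the clean tail bound: the survivor count is only \emph{conditionally} a binomial, because the all-zero retry and the all-one stall break exact independence across rounds, so the stochastic-domination coupling must be arranged so that the $Y_t$ are genuinely independent of the past and truly lower-bound the halving indicators uniformly over all reachable states $c\ge 2$. The delicate point is the small-$c$ regime, where concentration of $\mathrm{Bin}(c,\tfrac12)$ is weak; securing the uniform $\tfrac14$ bound there, rather than only for large $c$, is precisely what lets a single $\Theta(\log n)$ budget carry the count all the way from $n$ down to exactly one.
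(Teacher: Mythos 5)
Your proposal is correct and takes essentially the same approach as the paper's own proof: a constant lower bound on the per-repetition probability of halving the candidate set, a coupling of these history-dependent halving events with i.i.d.\ Bernoulli variables, and a Chernoff bound showing that $\Theta(\log n)$ repetitions yield the $\lceil\log_2 n\rceil$ halvings needed to reach a single candidate. Your uniform $\tfrac14$ bound is in fact slightly more careful than the paper's stated $\tfrac12$ for ``at least half draw $0$'', since you explicitly discount the all-zeros round in which no candidate is eliminated.
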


\begin{proof}
If the number of candidates is at least 2, the probability that in the relevant 
repetition of the external loop at least half of the candidates draw $0$ is at least $1/2$.
Consider a series of $c\log n$ consecutive repetitions and form a 
binary $0$-$1$ sequence $\sigma$ of length $c\log n,$ 
in which the entries correspond to these repetitions.
If prior to a repetition only one candidate remains,  the entry in $\sigma$ 
is chosen uniformly at random by a single coin~toss.
If there are more candidates drawing $1$s than $0$s, 
then the relevant entry becomes~$1$.
If there is more than one candidate and at least half of them draw $0$, en extra random
selection is triggered, s.t., the probability of choosing $0$ is exactly $1/2$.
Note, that if the sequence has at least $\log n$ $1$s, then exactly one leader remains.
By the Chernoff bound the probability, that $\sigma$ contains less than $\log n$ $1$s 
is smaller than $e^{-(1-1/c)^2c\log n/2},$ and in turn smaller 
than $n^{-\eta},$ for a constant $c$ large enough.
\end{proof}

The main problem with utilisation of the protocol described above is the need of 
implementing a counter of $\Theta(\log n)$ repetitions utilising a very small memory.
We also need to implement multi-broadcast of $1$s which 
requires $\Theta(n\log n)$ interactions whp.
The multi-broadcast can be implemented via one-way epidemic 
described in Section~\ref{PR:sect}.
The two processes can be controlled by the phase clock run in both the external and 
the ordinary modes respectively, using a constant number of states.
This is conditioned by forming a junta of at most $n^{1-\eps}$ leaders.
In Section~\ref{FJ:sect} we described the relevant $\formingjunta$ protocol
which reduces the number of leaders to $O(\sqrt{n\log n})$ and which
utilises $\Theta(\log\log n)$ states at each agent.
Our leader election protocol starts with a single execution of 
protocol $\formingjunta$ which is followed by the leader reduction
mechanism allowing to reduce the original junta team to a single leader.

All agents enter the leader election protocol in the same state.
The current state of an agent is represented by a vector $(l,a,b,x,y,\mathbf{z})$ where
all entries, with exception of $l,$ have constant size descriptions.
A non-negative integer $l$ refers to the number of levels bounded by $O(\log\log n)$.
Other positions contain small integer constants 
$a\in\{0,1\}$, $b\in\{\lead,\foll\}$, which refer to the leadership status, and
$x,y\in{\mathbb Z}_m$ are utilised for the phase clock's ordinary and external modes respectively.
The remaining state overheads imposed by our protocol are encoded in $\mathbf{z}=(z_0,z_1,z_2)$ 
which is limited to a constant number of values used to steer the protocol of leader elimination.
Here $z_0\in\{\draw,\spread\}$ means participation in either drawing 0,1 by leaders or spreading value 1 if was drawn.
Since $l$ assumes $O(\log\log n)$ values and all other variables can have only a constant number of values,
the total number of states in the protocol is $O(\log\log n)$.
This number of states can be more precisely estimated since $l$ can be upperbounded by $\log\log n+c$,
where $c$ is a constant depending on $\eta$.
Indeed a more careful estimation based on Theorem \ref{FJ:main} gives the number of levels $\log\log n+1$
needed to reduce the number of agents to $n^{1/3}$ whp.
And complete elimination of agents would require extra constant $c=\Omega(\eta)$ levels whp $1-n^{-\eta}$.
Taking into account the number of possible values of variables $a,b,x,y,\mathbf{z}$ we get the total
number of states $48 m^2(\log\log n+c)$ for $m,c$ depending on $\eta$.


\begin{algorithm}[ht!]
	\caption{{\tt Leader\_election}($A$:$(l,a,b,x,y,\mathbf{z}))$}
	\label{alg:unknown}
	\begin{algorithmic}[1]
    \State {\bf execute} \formingjunta\ \Comment{concludes with $a=0$}
    \State $(x,y,z_0,z_1,z_2)\gets (0,0,\draw,\emptyset,0)$ \Comment{set entries}
 \Loop\ through all interactions\Comment{main loop}
    \State $\{A$ meets an agent in state $(l',0,b',x',y',\mathbf{z'})\}$ 
    \State{\bf if} $(l<l')$ {\bf then} 
    \State\ \ \ {$(l,b,x,y,z_0,z_1,z_2)\gets (l',\foll,0,0,\draw,\emptyset,0)$} 
    \State{\bf if} ($A$ is responder) {\bf then} 
    \State\ \ \ perform operations of phase clocks
    \State{\bf if} (phase $x$ just passed through 0) {\bf then}
    \State\ \ \ {\bf if} ($z_0=\draw$) {\bf then} 
    \State \ \ \ \ \ \ $(z_0,z_2)\gets(\spread,0)$
    \State \ \ \ {\bf else} $(z_0,z_1)\gets(\draw,\emptyset)$
    \State{\bf if} ($z_0=\draw\ \&\ z_1=\emptyset\ \&\  $
    \State $ l=\lead\ \&\ l'=\foll$) {\bf then}
    \State\ \ \ {\bf if} ($A$ is responder) {\bf then}
    \State\ \ \ \ \ \ $z_1\gets 0$
    \State\ \ \ {\bf else} $z_1\gets 1$
    \State{\bf if} ($z_0=\spread\ \&$ $A$ is responder) {\bf then}
    \State\ \ \ $z_2\gets\max\{z_2,z_1',z_2'\}$       
    \State{\bf if} ($z_0=\spread$ \& $l=\lead$ \& 
    $z_1=0$ \& $z_2=1$) {\bf then}
    \State\ \ \ $l\gets\foll$ 
  \EndLoop
    \end{algorithmic}
\end{algorithm}


\paragraph{Spoiled \formingjunta\ protocol.} All agents start the leader election protocol with $(l,a,b,x,y)=(0,1,\lead,0,0),$ and they run \formingjunta\ protocol in state 
$(l,a),$ for as long as $b=\lead$.
As soon as $b$ gets value $\foll,$ which is irreversible, 
according to \formingjunta\ protocol the state of the relevant agent becomes $(0,0)$.
This happens only when $l$ is not at the highest level in the population, 
thus the protocol \formingjunta\ gets spoiled this way only occasionally.
The relevant detail will be described in the next paragraph.
According to Corollary~\ref{FJ:final} each agent should conclude spoiled \formingjunta\ protocol 
in the first $\Theta(n\log n)$ interactions whp.

\paragraph{Phase clocks on different levels.}
Once value $a$ becomes $0$, the agent starts its phase clocks on level $l$ as 
the leader with parameters $x=y=0$.
When an agent with phase clock on level $l$ interacts with an agent with the phase clocks on a higher level $l'>l$,
its state is rewritten $(l,b,x,y)\leftarrow(l',\foll,0,0)$.
This way the agent aligns its phase clocks in phase $0$ on level $l'$ and 
ends up in state $(0,0)$ in the spoiled variant of \formingjunta\ protocol.
The level of the phase clock can be incremented this way many times 
until it attains the maximum level $L^*$ ever reached by the population.
Thus eventually all agents run together the phase clock on level $L^*$.
All agents which advance to level $L^*$ in spoiled \formingjunta\  protocol 
are the leaders of the phase clocks and others act as followers.
Note that agents that end \formingjunta\  protocol on level $L^*$ are
never notified that they computed the highest level.
Only agents that end \formingjunta\ on a lower level are forced in some
moment to join the protocol with level $L^*$ as followers.
We run the phase clock in the ordinary mode and in the external mode simultaneously to implement the two loops
described in the beginning of Section~\ref{PC:sect}.
The phase clock in the ordinary mode is driven by all interactions in which the responder has value $a=0$.
If the responder interacts with an initiator on a higher level it advances its clock level as described above.
If the responder has the same clock level as the initiator, they both perform one interaction in the ordinary mode.
If the responder interacts with the initiator on a lower level or having $a=1$, then this interaction is void in the ordinary mode.
The phase clock operates in the ordinary mode until it passes through $0$ for the first time.
And it counts for each agent the first $\Theta(n\log n)$ interactions by Fact~\ref{PC:fault}. 

\paragraph{Random coin tosses.} Each remaining leader $v$ picks randomly $0$ or $1$ during 
the first interaction with a non-leader after the phase of $v$ (in the ordinary mode of the clock) passes through $0$.
If the non-leader is the initiator $v$ chooses $1,$ otherwise $v$ picks $0$.
This gives a truly random value to each leader, and since there are $O(\sqrt{n\log n})$ leaders,
this process is completed whp during a constant number of interactions.

\paragraph{Leader candidate elimination on the highest level.}
After choosing a value $0$ or $1$ at random, the leaders multi-broadcast $1$s to the whole population
via one-way epidemic. The required $\Theta(n\log n)$ interactions 
are counted with the help of the phase clock in the ordinary mode.
In order to obtain a unique leader whp, this process is iterated $\Theta(\log n)$ times by the external loop
and controlled by the phase clock in the external mode.
The protocol concludes at each agent, when its external clock attains phase $m-1$.
The following theorem holds.

\begin{theorem}\label{LE:whp}
The protocol described above finds a unique leader in $O(n\log^2 n)$ interactions whp.
\end{theorem}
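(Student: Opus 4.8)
The plan is to assemble the statement from the four ingredients already proved—junta formation (Corollary~\ref{FJ:final}), the junta phase clock (Theorem~\ref{PC:main}), one-way epidemic timing (Theorem~\ref{PR:oneway} and Lemma~\ref{PC:1wayext}), and the abstract elimination scheme (Theorem~\ref{LE:scheme})—and to show that their composition succeeds within a single union bound over the relevant failure events. First I would invoke Corollary~\ref{FJ:final}: the spoiled \formingjunta\ protocol stabilises in $O(n\log n)$ interactions whp, leaving a junta of $B^*_{L^*}=O(\sqrt{n\log n})$ leaders on the top level $L^*$. Since $\sqrt{n\log n}=n^{1/2+o(1)}\le n^{1-\eps}$ for any fixed $\eps<1/2$, choosing $\eps=\frac{3}{3k+1}$ with $k\ge 2$ (so $\eps\le\frac{3}{7}<\frac12$) makes the hypothesis of Theorem~\ref{PC:main} hold. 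Agents that conclude \formingjunta\ on a lower level are absorbed to $L^*$ on meeting a higher-level agent and reset to phase $0$; by Facts~\ref{PC:drop} and~\ref{PC:fault} both these late absorptions and the leader drop-offs caused by spoiling can only slow the clock, so the $O(n\log n)$ bound between consecutive passes through $0$ is preserved.

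Next I would set up the two-loop timing. The ordinary-mode clock times each internal loop: by Theorem~\ref{PC:main}, consecutive passes through $0$ at any agent are $O(n\log n)$ interactions apart, while passes in different equivalence classes are separated by at least $dn\log n$ interactions. Taking $d$ to be at least the one-way-epidemic constant $d'$ from Theorem~\ref{PR:oneway} and Lemma~\ref{PC:1wayext}, each internal loop supplies enough time for the multi-broadcast of the drawn $1$'s to infect the whole population before the next round begins, so every candidate that drew $0$ hears a broadcast $1$ whenever at least one candidate drew $1$. The coin-toss step finishes in a constant number of interactions whp because only $O(\sqrt{n\log n})$ leaders toss, contributing negligible overhead. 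The external-mode clock then counts these internal loops—its meaningful interactions are precisely the series following ordinary passes through $0$—and I would tune the external parameter $m$ so that the external phase reaches the terminating value $m-1$ after exactly the $\Theta(\log n)$ internal loops demanded by Theorem~\ref{LE:scheme}.

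Correctness and running time then follow together. By Theorem~\ref{LE:scheme}, $\Theta(\log n)$ faithful rounds of the draw-and-spread scheme reduce the candidate set to a singleton whp, and the timing argument guarantees every round is faithful, so a unique leader survives when the external clock halts. For the time bound, the $O(n\log n)$ junta phase plus $\Theta(\log n)$ internal loops of $O(n\log n)$ interactions each yields $O(n\log^2 n)$ interactions overall. A union bound over the junta-formation event, the phase-clock guarantee (failure $n^{-\eta}$ across $n^\eta\gg\log n$ passes), the per-round epidemic completions (failure $O(\log n)\cdot n^{-\eta-1}$), and the scheme's failure ($n^{-\eta}$) keeps the total failure probability below $n^{-\eta}$, which is the claimed whp statement.

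The hard part will be the clean hand-off between the two clock modes together with the staggered start-up: different agents reach $L^*$ at different times, and only agents finishing on lower levels are ever told to re-synchronise to phase $0$ on level $L^*$. I would need to argue that these late arrivals, each entering at phase $0$ (hence never ahead of the running agents, consistent with Fact~\ref{PC:fault}), all occur within the first constant number of phases—this is where Lemma~\ref{PC:progress} is used, since junta formation occupies only $O(n\log n)$ clock-time, i.e.\ a bounded number of phases—so that every clock stays inside a single equivalence-class window and Theorem~\ref{PC:main} genuinely applies to the composed protocol rather than to an idealised clock started from a common phase $0$. Carrying this loose-synchronisation invariant, via Facts~\ref{PC:drop} and~\ref{PC:fault}, through all $\Theta(\log n)$ rounds is the delicate step of the argument.
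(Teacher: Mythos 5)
Your proposal is correct and takes essentially the same route as the paper: the paper in fact gives no separate proof of Theorem~\ref{LE:whp}, justifying it instead by the protocol description in Section~\ref{LE:sect}, which composes Corollary~\ref{FJ:final} (junta of $O(\sqrt{n\log n})$ leaders in $O(n\log n)$ interactions), Theorem~\ref{PC:main} with Facts~\ref{PC:drop} and~\ref{PC:fault} covering spoiling and staggered level changes, the epidemic bounds, and the $\Theta(\log n)$ iterations of the scheme from Theorem~\ref{LE:scheme} counted by the external clock, exactly as you do. Your handling of the mode hand-off and late arrivals is, if anything, more explicit than the paper's own informal treatment.
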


Now we formulate a Las Vegas variant of our algorithm to more accurately 
match the existing lower bound $\Omega(\log\log n)$ on the number of states in fast leader election~\cite{AA+17}.

\begin{theorem}
For agents equipped with $O(\log\log n)$ states, there exists a leader election protocol 
which always gives the correct answer and works in parallel time $O(\log^2 n)$ whp.
\end{theorem}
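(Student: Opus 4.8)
The plan is to upgrade the whp protocol of Theorem~\ref{LE:whp} into a Las Vegas (always correct) protocol by superimposing on it a slow but infallible \emph{backup} elimination rule, while reusing the existing leadership flag $b\in\{\lead,\foll\}$ so that no new states are introduced and the total count stays $O(\log\log n)$. Concretely, I would retain the whole machinery of Theorem~\ref{LE:whp} — the spoiled \formingjunta, the ordinary and external phase-clock modes, coin tossing, and one-way-epidemic elimination — and add to the permanent transition function the classical two-state rule: whenever two agents that both carry $b=\lead$ interact, the responder irreversibly sets $b\gets\foll$. This rule is active throughout the entire execution, even after the external clock has reached its terminal phase.

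Next I would establish correctness. Under every rule of the combined protocol the number of agents with $b=\lead$ is nonincreasing: the whp elimination only turns a candidate into a follower, and the backup rule removes exactly one leader per firing. The count never reaches $0$, for two reasons. The whp elimination deletes a candidate only when it receives a broadcast ``$1$'', and any such ``$1$'' originates from a candidate that drew $1$ and hence survives; the backup rule fires only when at least two leaders are present and then leaves at least one. Thus at least one leader persists forever. Conversely, any configuration with $\ge 2$ leaders is not stable, since with probability $1$ two of them eventually meet and the count strictly drops. Hence the only stable configurations are those with exactly one leader and every configuration reaches one almost surely, so the protocol always returns the correct answer. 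I would also observe that pruning leaders cannot stall the surrounding protocol: by Fact~\ref{PC:drop} the phase clock keeps advancing with $O(n\log n)$ interactions between passes through $0$ as long as a single leader remains, so the backup rule never deadlocks the clock.

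Finally I would bound the running time. By Theorem~\ref{LE:whp}, with high probability the fast protocol already leaves a unique leader after $O(n\log^2 n)$ interactions, i.e.\ $O(\log^2 n)$ parallel time. On that event exactly one agent carries $b=\lead$ from then on, so the backup rule is never triggered and adds nothing to the stabilisation time; the protocol therefore stabilises in parallel time $O(\log^2 n)$ whp. In the complementary negligible-probability event the backup rule still drives the population to a single leader, merely more slowly, which is acceptable because we only require the time bound to hold whp.

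The main obstacle is not the slow backup itself but checking that grafting it onto the fast protocol simultaneously preserves the invariant ``never zero leaders'' and the timing analysis — in particular that the extra rule cannot prematurely thin out the junta in a way that breaks the phase clock before the external loop finishes. This is resolved by Fact~\ref{PC:drop} (one surviving leader suffices to keep the clock running at $O(n\log n)$ per pass) together with the fact that, whp, the fast protocol terminates before the backup has any opportunity to act.
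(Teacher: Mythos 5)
Your key invariant --- that the number of agents with $b=\lead$ never reaches $0$ --- is false, and this is fatal for the Las Vegas claim. The root cause is that you keep the coin-toss/epidemic elimination rule of Theorem~\ref{LE:whp} alive forever: your clocks remain cyclic ($\max_m$ is retained), so draw/spread rounds never stop, and that rule can kill the \emph{last} leader. For instance, the unique leader draws $1$ and seeds the epidemic of ``$1$''s; some follower $f$ whose ordinary clock lags behind keeps $z_2=1$ unreset; the leader then completes further passes through $0$, draws $0$ in a later round, and finally meets $f$ as responder, acquiring $z_2=1$ while $z_1=0$, which eliminates it. This schedule has positive probability, and since no transition ever recreates a leader, the zero-leader configurations form a closed set reachable from every reachable configuration; by finite Markov chain theory your protocol therefore loses all leaders almost surely, i.e.\ it never stabilises with a single leader, let alone ``always gives the correct answer.'' Your grafted backup rule even adds a second failure mode of the same kind: with two leaders left, the pairwise rule can eliminate the one that drew $1$ while its broadcast ``$1$'' is still in flight, after which that broadcast eliminates the $0$-drawer --- so the sentence ``any such `$1$' originates from a candidate that drew $1$ and hence survives'' is exactly the step your own extra rule destroys. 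Fact~\ref{PC:drop} does not help here: it protects clock progress, not this interaction between the two elimination mechanisms.

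The paper's construction is structurally different precisely to avoid this, and the difference is the idea your argument is missing: the fast elimination rule must be \emph{irreversibly switched off}. Concretely, $\max_m$ is replaced by plain $\max$ so the external clock is monotone and stops at phase $m-1$; a leader reaching phase $m-1$ becomes immune to coin-toss elimination and can thereafter be removed only by pairwise elimination among phase-$(m-1)$ leaders, which can never empty the set. On top of this the paper adds: (i) a separate two-state slow protocol~\cite{DS+15}, in which all $n$ agents start as candidates, running underneath from the start and completing the election in the failure cases where all fast leaders disappear or several survive; (ii) detection of ``distant'' ordinary-clock phases, upon which all interactions become meaningful for the external clock and this setting spreads by one-way epidemic, so after a clock breakdown the hand-over to the slow regime occurs within $O(n\log n)$ further interactions; and (iii) a hard cap $L'=\Theta(\log\log n)$ on \formingjunta\ levels --- without this cap two agents can ping-pong their levels upward with positive probability, so the protocol you inherit from Theorem~\ref{LE:whp} does not even admit an $O(\log\log n)$-state description (it only \emph{uses} $O(\log\log n)$ states whp), contradicting the space bound in the statement. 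Your proposal contains none of these mechanisms, and reusing the single flag $b$ for both the fast and the backup elimination is exactly what makes the combination unsound.
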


\begin{proof}
In the Las Vegas variant of our protocol 
the external clock utilises the set of transitions defined as before.
However, we replace ${\max}_m$ by the standard maximum as we 
assume that the external clock stops after reaching phase $m-1.$
We also need to impose here the level limit $L'=\Theta(\log\log n)$ in \formingjunta\ protocol.
If this level $L'$ is achieved, which occurs with a negligible probability,
the agent's level is no longer incremented as it plays the role of $L^*$.
In a very unlikely event an agent $v$ may interact with any other agent with {\em a distant} ordinary phase clock value,
i.e., the relevant phase clock values $x$ and $x+_m a$ satisfy $m/5<a<4m/5$.
In such case we make agent $v$ utilise all subsequent interactions in its external clock as meaningful. 
In addition, 
it also infects with this setting all other agents via one-way epidemic. 
By Theorem~\ref{LE:whp} we can construct a fast leader election protocol with the clock 
phases drawn from ${\mathbb Z}_m$, s.t., a single leader is elected and the external phase clocks 
in all agents conclude in phase $m-1$ during the first $O(n\log^2 n)$ interactions whp $1-n^{-10}.$ 
Thanks to Lemma~\ref{PC:move} used in the proof of correctness of the relevant clock construction
we can derive an extra observation that no two agents can have distant ordinary phase clock values 
during execution of the protocol whp $1-n^{-10}$.

If a leader enters external phase $m-1$ in the fast protocol we have just described, it can no longer be 
eliminated by this protocol. Independently, all agents run from the beginning a slow two-state based leader 
election protocol with the expected number of interactions $O(n^2\log n)$~\cite{DS+15}.
In this slow protocol, whenever two leader candidates interact directly the initiator eliminates the responder.
If a leader candidate of this slow protocol reaches phase $m-1$ in the external mode,
it stops being a candidate for the leader, unless it is still a leader in the fast protocol.
The leaders reaching external phase $m-1$ in the external clock eliminate other leaders in the fast protocol in 
direct pairwise interactions according to the slow protocol principle.

Note that all agents complete \formingjunta\ protocol with expected $O(n\log n)$ interactions.
Assume this part of leader election is already completed.
Let $E$ be the expected number of interactions in the leader election algorithm. 
We have
\[E\leq (1-n^{-10})\cdot cn\log^2 n+n^{-10}\max\{E',E''\}.
\]
In this formula $E'$ and $E''$ are the expected numbers of interactions if we start from the worst case 
configurations respectively not containing ($E'$) and containing ($E''$) distant ordinary clock phases.
If~we start from the configuration not containing distant ordinary clock phases,
the external phase clock reaches phase $m-1$ in all agents or all leaders disappear
during $O(n\log^2 n)$ interactions whp $1-n^{-10},$ unless an interaction between agents with 
distant ordinary clock phases occurs at some point.
This can be proved using Lemma~\ref{PC:move} and argument analogous to the proof of Theorem \ref{LE:whp}.
In the latter case the external clock reaches phase $m-1$ whp in $O(n\log n)$ interactions 
(after a distant interaction takes place) unless all leaders in the fast protocol disappear.
When the fast leader election protocol fails, i.e., 
it either produces multiple leaders or all candidates for leaders disappear,
the leader election process is completed during $O(n^2\log n)$ interactions 
of the slow leader election protocol. Thus
\[E'\leq (1-n^{-10})\cdot cn^2\log n+n^{-10}\max\{E',E''\}.
\]
If $E'\geq E''$ we get $E',E''=O(n^2\log n)$ from this inequality.
When we start in the worst case configuration in which there are two agents with distant ordinary 
phase clock values, they meet during the first interaction of the protocol with probability at least $n^{-2}$. 
And when this happens, the external clock reaches phase $m-1$ in $O(n\log n)$ interactions whp
and also in this case the unique leader is selected whp during $O(n^2\log n)$ interactions of the slow protocol. 
Thus
\[E''\leq n^{-2}\cdot cn^2\log n+(1-n^{-2})(\max\{E',E''\}+1).
\]
If $E''\geq E'$, we get $E',E''=O(n^2\log n)$ from this inequality.
And since $E',E''=O(n^2\log n)$ we conclude $E=O(n\log^2 n)$.
\end{proof}

\section{Conclusion}\label{FD:sect}

We studied in this paper fast and space efficient leader election in population protocols. 
Our new protocol stabilises in (parallel) time $O(\log^2 n)$ when each agent is equipped with $O(\log\log n)$ states. 
This double logarithmic space utilisation matches asymptotically the lower bound $\frac{1}{2}\log\log n$ on 
the minimal number of states required by agents in any leader election algorithm with the running 
time $o(\frac{n}{{\rm polylog}\ n})$, see~\cite{AA+17}.
For the convenience of the reader we provide the logical structure of the full argument 
in the form of a diagram, see Figure~\ref{f:diagram}.
\begin{figure}[thb]
\includegraphics[scale=0.2150]{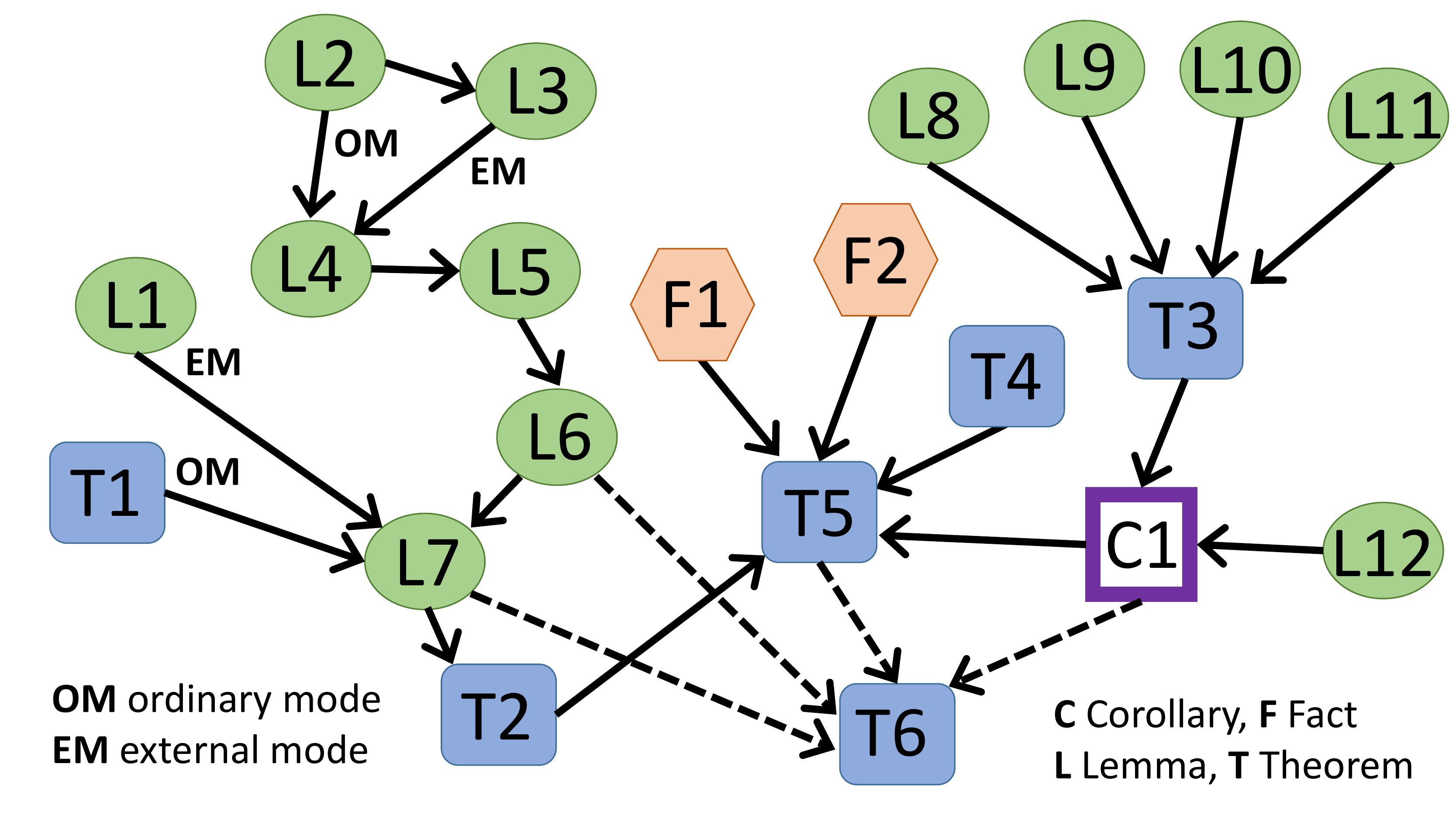}
\caption{The structure of the argument\label{f:diagram}}
\end{figure}
We also share with the reader a diagram illustrating transitions 
between states during leader election protocol, see Figure~\ref{f:diag2}.

\begin{figure}[bth]
\includegraphics[scale=0.240]{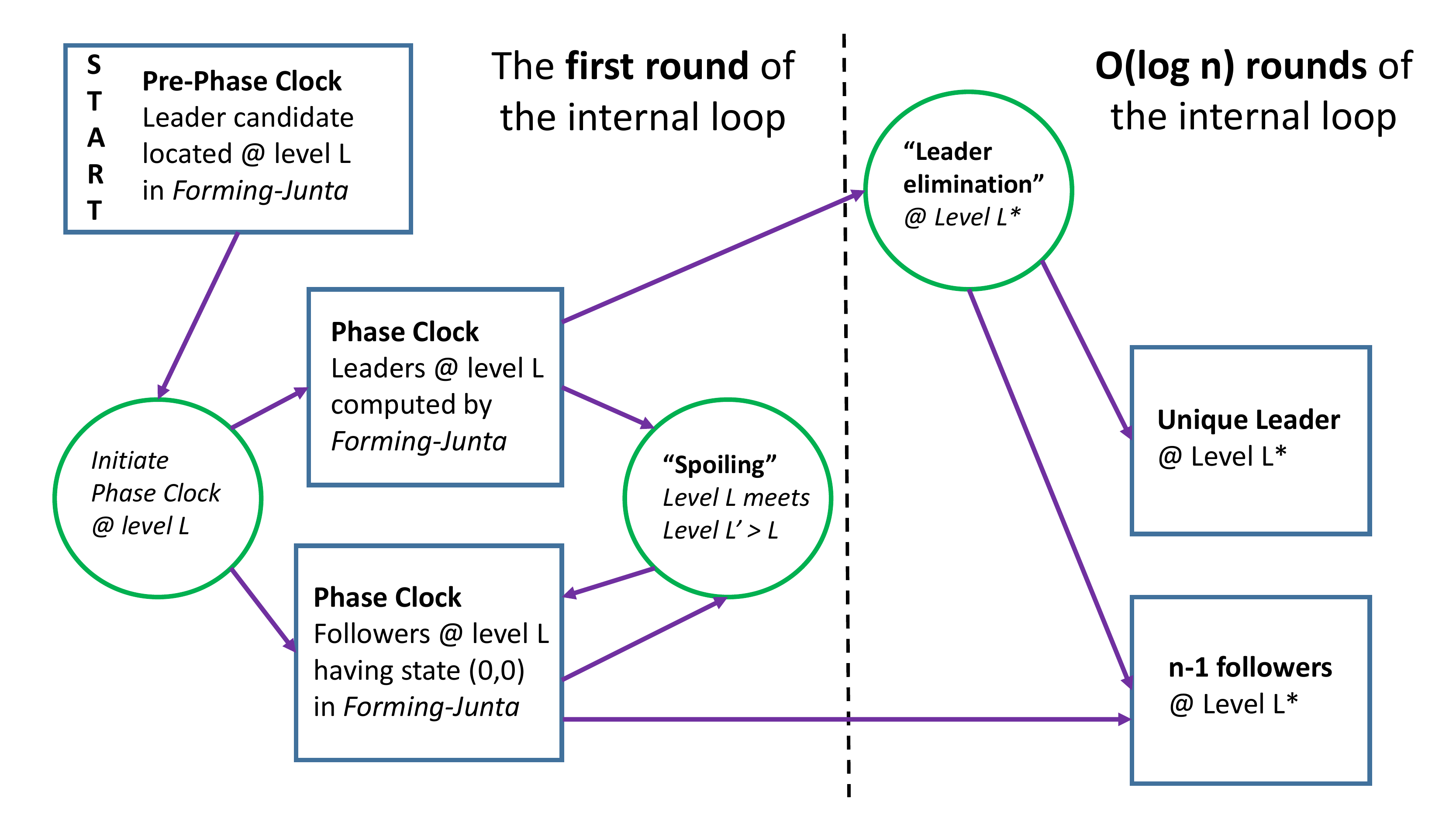}
\caption{Transitions between states\label{f:diag2}}
\end{figure}

\paragraph{Further extensions.} In this paper we propose a Las Vegas type algorithm which achieves 
stabilisation in the sense, that eventually a single (unique) agent arrives in a leader state 
and all other agents arrive in follower states. 
More precisely, in some (unlikely) scenarios the chosen leader can hover between different leader states 
and similarly all other agents can switch between different follower states indefinitely.
The stabilisation process can be also considered in a stronger sense where the states of all agents finally freeze as pointed out to us by Dominik Kaaser. 
In what follows we explain briefly how our algorithm can be modified to meet this stronger requirement.  
%

In the enhanced variant of our algorithm on the conclusion of spoiled \formingjunta\ protocol 
we run three different phase clocks including:
(1) the ordinary clock run by all leaders computed by \formingjunta\ protocol;
(2) an external clock run by all leaders computed by \formingjunta\ protocol; and
(3) an external clock run by all leaders that remain {\em active}.
%

As in the Las Vegas protocol $\max_m$ is replaced by $\max$ in external clock protocols.
Similarly to the protocols proposed in this paper the leaders are eliminated using coin tossing.
The outcome of coin tosses is communicated via one-way epidemic protocols controlled 
by the ordinary clock (1), and the number of the relevant loop repetitions is controlled by clock (3).
Once a leader makes a losing coin toss (picks value 0), it is not instantly eliminated.
Instead, the leader becomes {\em inactive} and awaits any motion of clock (3).
When clock (3) eventually moves the leader gets eliminated.
When clock (2) reaches the last phase, all agents reach the final state associated with the clock level.
This process guarantees that at least one leader survives.
Finally if there is more than one leader in the final state,
the remaining leaders elect a unique leader 
using the slow leader election protocol. 

This process finishes in expected time $\Theta(\log^2 n)$ assuming flawless performance of clock (1).
A problem may occur in an unlikely event when there are agents for which phases on clocks of type (1) are distant.
And indeed if during an interaction of two agents their respective phase clocks of type (1) are in distant phases,
these two agents inform all other agents (including leaders) about broken (out of phase) clocks 
via one-way epidemic. Thus all agents get to the final state associated with their clock level.
As in the previous protocol if an agent on a given level (even in the final state) interacts with
an agent on a higher level, it switches to this level resetting its phase clocks.

\paragraph{Open problems.}There are some interesting unanswered questions left for further consideration.
For example, whether one can select whp a unique leader in time $o(\log^2 n)$ with $O(\log\log n)$ states available at each agent.
Another question refers to the exact space complexity of majority as well as plurality consensus in deterministic population protocols
considered recently, see, e.g.,~\cite{GH+16}.

\paragraph{Acknowledgements.} We would like to thank our research collaborators
Tomasz Jurdzi\'nski, Aris Pagourtzis, Tomasz Radzik, Micha\l\ R\'o\.za\'nski, and Paul Spirakis for their suggestions in early stages 
of this work, as well as Dave Doty, Rati Gelashvili and Dominik Kaaser for their helpful comments on earlier versions of this paper.
Special thanks go to all anonymous reviewers whose exhaustive comments helped us to improve the quality of the final version.

%


\bibliographystyle{plain}

\begin{thebibliography}{10}

\bibitem{AR16}
A.~Andoni and I.P.~Razenshteyn,
Tight Lower Bounds for Data-Dependent Locality-Sensitive Hashing, 
Proc. {\em Symposium on Computational Geometry} 2016, paper 9:1-9:11.

\bibitem{A80}
D.~Angluin, 
Local and global properties in networks of processors,
Proc. {\em 12th Annual ACM Symposium on Theory of Computing}, STOC 1980, 82--93.

\bibitem{AA+04}
D.~Angluin, J.~Aspnes, Z.~Diamadi, M.J.~Fischer, and R.~Peralta, 
Computation in networks of passively mobile finite-state sensors. 
Proc. {\em 23rd Annual ACM Symposium on Principles of Distributed Computing},
PODC 2004,~290--299.

\bibitem{AA+06}
D.~Angluin, J.~Aspnes, Z.~Diamadi, M.J.~Fischer, and R.~Peralta, 
Computation in networks of passively mobile finite-state sensors. Distributed Computing, 
18(4), 2006, 235--253.

\bibitem{AA+08}
D.~Angluin, J.~Aspnes, D.~Eisenstat.
Fast computation by population protocols with a leader, 
Distributed Computing 21(3), 2008, 183--199.

\bibitem{AA+08b}
D.~Angluin, J.~Aspnes, and D.~Eisenstat. 
A simple population protocol for fast robust approximate majority, 
Distributed Computing, 21(2), 2008, 87--102.

\bibitem{AA+17} 
D.~Alistarh, J.~Aspnes, D.~Eisenstat, R.~Gelashvili and R.L.~Rivest,
Time-Space Trade-offs in Population Protocols,
Proc. {\em 28th Annual ACM-SIAM Symposium on Discrete Algorithms}, SODA 2017, 2560--2579.

\bibitem{AAG17}
D.~Alistarh, J.~Aspnes, and R.~Gelashvili,
Space-Optimal Majority in Population Protocols,
Proc. {\em 29th Annual ACM-SIAM Symposium on Discrete Algorithms}, SODA 2018, 
also {\tt arXiv:1704.04947 [cs.DC]}. 

\bibitem{AG+15}
D.~Alistarh and R.~Gelashvili,
Polylogarithmic-time leader election in population protocols,
Proc. {\em 42nd International Colloquium on Automata, Languages, and Programming}, ICALP 2015, 479--491.

\bibitem{AD+91}
A.~Arora, S.~Dolev, and M.G.~Gouda, 
Maintaining digital clocks in step, 
Proc. {\em 5th International Workshop on Distributed Algorithms}, (LNCS 579) 1991,~71--79.

\bibitem{AS91}
H.~Attiya and M.~Snir, 
Better Computing on the Anonymous Ring, 
Journal of Algorithms 12, 1991, 204--238.

\bibitem{AS+88}
H. Attiya, M. Snir, and M. Warmuth, 
Computing on an Anonymous Ring, 
Journal of the ACM 35, 1988, 845--875.

\bibitem{AW04}
H. Attiya and J.~Welch, 
Distributed Computing: Fundamentals, Simulations, and Advanced Topics, 
2nd Edition, Willey, April 2004.

\bibitem{BC+15}
L.~Becchetti, A.E.F.~Clementi, E.~Natale, F.~Pasquale, and R.~Silvestri,
Plurality consensus in the gossip model, 
Proc. {\em 26th Annual ACM-SIAM Symposium on Discrete Algorithms}, SODA 2015,371--390.


\bibitem{BF+16}
P.~Berenbrink, T.~Friedetzky, G.~Giakkoupis, and P.~Kling. 
Efficient plurality consensus, or: The benefits of cleaning up from time to time. 
Proc. {\em 43rd International Colloquium on Automata, Languages, and Programming}, ICALP 2016, 1--14.

\bibitem{BS+96}
P.~Boldi, S.~Shammah, S.~Vigna, B.~Codenotti, P.~Gemmell, and J.~Simon, 
Symmetry Breaking in Anonymous Networks: Characterizations, 
Proc. {\em 4th Israel Symposium on Theory of Computing and Systems}, ISTCS 1996, 16--26.

\bibitem{BV99}
P.~Boldi and S.~Vigna, 
Computing Anonymously with Arbitrary Knowledge, 
Proc. {\em 18th ACM Symp. on Principles of Distributed Computing}, PODC 1999, 181--188.

\bibitem{B80}
J.E. Burns, 
A Formal Model for Message Passing Systems, 
Tech. Report TR-91, Computer Science Department, Indiana University, Bloomington, September 1980.

\bibitem{CKL16}
L.~Cardelli, M.~Kwiatkowska, and L.~Laurenti,
Programming Discrete Distributions with Chemical Reaction Networks, 
Proc. {\sl 22nd International Conference DNA Computing and Molecular Programming},
DNA 2016,~35--51.

\bibitem{CM+11}
I.~Chatzigiannakis, O.~Michail, S.~Nikolaou, A.~Pavlogiannis, and P.G.~Spirakis,
Passively mobile communicating machines that use restricted space. 
Proc. {\em 7th ACM ACM SIGACT/SIGMOBILE International Workshop on Foundations of Mobile
Computing}, 2011, 6--15.

\bibitem{CC+14}
H.-L.~Chen, R.~Cummings, D.~Doty, and D.~Soloveichik, 
Speed faults in computation by chemical reaction networks,
Distributed Computing, Springer 2014, 16--30.

\bibitem{DP14}
D.~Dereniowski and A.~Pelc, 
Leader election for anonymous asynchronous agents in arbitrary networks, 
Distributed Computing 27, 2014, 21--38.

\bibitem{DP04}
S.~Dobrev and A.~Pelc, 
Leader Election in Rings with Nonunique Labels, 
Fundamenta Informaticae 59, 2004, 333--347.

\bibitem{DW04}
S.~Dolev and J.L.~Welch, 
Self-stabilizing clock synchronization in the presence of Byzantine faults, 
Journal of the ACM, 51(5), 2004, 780--799.

\bibitem{D14}
D.~Doty, 
Timing in chemical reaction networks. 
Proc. {\em 25th Annual ACM-SIAM Symposium on Discrete Algorithms}, SODA 2014, 772--784. SIAM.

\bibitem{DD+03}
A.~Daliot, D.~Dolev, and H.~Parnas,
Self-stabilizing pulse synchronization inspired by biological pacemaker networks,
Proc. {\em Self-Stabilizing Systems}, SSS 2003,~32--48.

\bibitem{DS+15}
D.~Doty and D.~Soloveichik,
Stable leader election in population protocols requires linear time, 
Proc. {\em 29th International Symposium on Distributed Computing}, DISC 2015, 602--616.

\bibitem{FK+04}
P.~Flocchini, E.~Kranakis, D.~Krizanc, F.L.~Luccio and N.~Santoro, 
Sorting and Election in Anonymous Asynchronous Rings, 
Journal of Parallel and Distributed Computing 64, 2004, 254--265.
responder
\bibitem{FL87}
G.N.~Fredrickson and N.A.~Lynch, 
Electing a Leader in a Synchronous Ring, 
Journal of the ACM 34, 1987, 98--115.

\bibitem{FP12}
E.~Fusco and A.~Pelc, 
Knowledge, level of symmetry, and time of leader election, 
Proc. {\em 20th Annual European Symposium on Algorithms}, ESA 2012, 479--490.

\bibitem{FP11} 
E. Fusco, A. Pelc, 
Trade-offs between the size of advice and broadcasting time in trees, 
Algorithmica 60, 2011, 719--734.


\bibitem{GH+15}
L.~G\k asieniec, D.D.~Hamilton, R.~Martin, and P.G.~Spirakis,
The match-maker: Constant space distributed majority via random walks, 
Proc. {\em 17th International Symposium on Stabilization, Safety, and Security of Distributed Systems}, SSS 2015,  67--80.

\bibitem{GH+16}
L.~G\k asieniec, D.~Hamilton, R.~Martin, P.G.~Spirakis and G.~Stachowiak,
Deterministic Population Protocols for Exact Majority and Plurality,
OPODIS'16 (post-conference proceedings, to appear).

\bibitem{GP16}
M.~Ghaffari and M.~Parter, 
A polylogarithmic gossip algorithm for plurality consensus.
Proc. {\em 34th Annual ACM SIGACT-SIGOPS Symposium on Principles of Distributed Computing}, PODC 2016, 117--126.

\bibitem{GM+16}
Ch.~Glacet, A.~Miller, and A.~Pelcresponder,
Time vs. Information Tradeoffs for Leader Election in Anonymous Trees,
Proc. {\em 27th annual ACM-SIAM symposium on Discrete algorithms}, SODA 2016, 600--609


\bibitem{HK08}
M.A.~Haddar, A.H.~Kacem, Y.~M\'etivier, M.~Mosbah, and M. Jmaiel, 
Electing a Leader in the Local Computation Model using Mobile Agents,
Proc. {\em 6th ACS/IEEE International Conference on Computer Systems and Applications}, AICCSA 2008, 473--480.

\bibitem{H00}
T.~Herman,
Phase clocks for transient fault repairs,
IEEE Transactions on Parallel and Distributed Systems 11(10), 2000,~1048--1057.

\bibitem{HS80}
D.S.~Hirschberg, and J.B.~Sinclair, 
Decentralized Extrema-Finding in Circular Configurations of Processes, 
Communications of the ACM 23, 1980, 627--628.

\bibitem{L77}
G. Le Lann, 
Distributed Systems - Towards a Formal Approach,
Proc. {\em IFIP Congress}, 1977, 155--160.

\bibitem{MA+15}
Y.~Mocquard, E.~Anceaume, J.~Aspnes, Y.~Busnel, and B.~Sericola,
Counting with population protocols,
Proc. {\em IEEE 14th International Symposium on Network Computing and Applications}, NCA 2015, 35--42.

\bibitem{MN+14}
G.B.~Mertzios, S.E.~Nikoletseas, C.~Raptopoulos, and P.G.~Spirakis, 
Determining majority in networks with local interactions and very small local memory. 
Proc. {\em 41st International Colloquium on Automata, Languages, and Programming}, ICALP 2014, 871--882.

\bibitem{MS+14}
O.~Michail and P.G.~Spirakis, 
Simple and efficient local codes for distributed stable network construction, 
Proc. {\em 32nd Annual ACM SIGACT-SIGOPS Symposium on Principles of Distributed Computing}, PODC 2014, 76--85.

\bibitem{P82}
G.L.~Peterson, 
An $O(n log n)$ Unidirectional Distributed Algorithm for the Circular Extrema Problem,
ACM Transactions on Programming Languages and Systems 4, 1982, 758--762.

\bibitem{YK89}
M.~Yamashita and T.~Kameda, 
Electing a Leader when Processor Identity Numbers are not Distinct, 
Proc. {\em 3rd Workshop on Distributed Algorithms}, WDAG 1989, 303--314.

\bibitem{YK96}
M. Yamashita and T. Kameda, 
Computing on Anonymous Networks: Part I - Characterizing the Solvable Cases, 
IEEE Trans. Parallel and Distributed Systems 7, 1996, 69--89.

\end{thebibliography}

\end{document}